\title{A New Approach to Incremental Cycle Detection\\and Related Problems}
\author{
  Michael A. Bender\\\textit{Department of Computer Science}\\\textit{Stony Brook
    University}
  \and Jeremy T. Fineman\\\textit{Department of Computer Science}\\\textit{Georgetown University}\vspace{1em}
  \and Seth Gilbert\\\textit{Department of Computer Science}\\\textit{National University of Singapore}
  \and Robert E. Tarjan\\\textit{HP}\\{\normalsize and}\\\textit{Department of Computer Science}\\\textit{Princeton
  University}
}
\date{}
\begin{document}

\footnotenonumber{
  This work was supported in part by the National Science Foundation,
  under grants 
  CCF-0621439/0621425, 
  CCF-0540897/05414009, 
  CCF-0634793/0632838, 
  and CNS-0627645 
  for M. A. Bender, 
  CCF-0621511, CNS-0615215, CCF-0541209, and NSF/CRA
  sponsored CIFellows program 
  for J. T. Fineman, and CCF-0830676 
  and CCF-0832797 for R. E. Tarjan.  
  The information contained herein does
  not necessarily reflect the opinion or policy of the federal
  government and no official endorsement should be inferred.}

\maketitle

\begin{abstract} 
  We consider the problem of detecting a cycle in a directed graph
  that grows by arc insertions, and the related problems of
  maintaining a topological order and the strong components of such a
  graph.  For these problems we give two algorithms, one suited to
  sparse graphs, the other to dense graphs.  The former takes
  $O(\min\{m^{1/2}, n^{2/3}\}m)$ time to insert $m$ arcs into an $n$-vertex
  graph; the latter takes $O(n^2 \log n)$ time.  Our sparse algorithm is
  considerably simpler than a previous $O(m^{3/2})$-time algorithm; it
  is also faster on graphs of sufficient density.  The time bound of
  our dense algorithm beats the previously best time bound of
  $O(n^{5/2})$ for dense graphs.  Our algorithms rely for their
  efficiency on topologically ordered vertex numberings; bounds on the
  size of the numbers give bounds on running time.
\end{abstract}

\secput{intro}{Introduction}

Perhaps the most basic algorithmic problem on directed graphs is cycle
detection.  We consider an incremental version of this problem: given
an initially empty graph that grows by on-line arc insertions, report
the first insertion that creates a cycle.  We also consider two
related problems, that of maintaining a topological order of an
acyclic graph as arcs are inserted, and maintaining the strong
components of such a graph.

We use the following terminology.  We order pairs lexicographically:
$a, b < c, d$ if and only if either $a < b$, or $a = b$ and $c < d$.
We denote a list by square brackets around its elements; ``[ ]''
denotes the empty list.  We denote list catenation by ``\&''.  In a
directed graph, we denote an arc from $v$ to $w$ by $(v, w)$.  We
disallow multiple arcs and loops (arcs of the form $(v, v)$).  We
assume that the set of vertices is fixed and known in advance,
although our results extend to handle on-line vertex insertions.  We
denote by $n$ and $m$ the number of vertices and arcs, respectively.
We assume that $m$ is known in advance; our results extend to handle
the alternative.  To simplify expressions for bounds we assume $n > 1$
and $m = \Omega(n)$; both are true if there are no isolated vertices.
A vertex $v$ is a \defn{predecessor} of $w$ if $(v, w)$ is an arc.
The \defn{size} $\id{size}(w)$ of a vertex $w$ is the number of
vertices $v$ such that there is a path from $v$ to $w$.  Two vertices,
two arcs, or a vertex and an arc are \defn{related} if they are on a
common path, \defn{mutually related} if they are on a common cycle
(not necessarily simple), and \defn{unrelated} if they are not on a
common path.  Relatedness is a symmetric relation.  The \defn{strong
  components} of a directed graph are the subgraphs induced by the
maximal subsets of mutually related vertices.

A \defn{dag} is a directed acyclic graph.  A \defn{weak topological
  order} $<$ of a dag is a partial order of the vertices such that if
$(v, w)$ is an arc, $v < w$; a \defn{topological order} of a dag is a
total order of the vertices that is a weak topological order.  A
\defn{weak topological numbering} of a dag is a numbering of the
vertices such that increasing numeric order is a weak topological
order; a \defn{topological numbering} of a dag is a numbering of the
vertices from 1 through n such that increasing numeric order is a
topological order.

There has been much recent work on incremental cycle detection,
topological ordering, and strong component
maintenance~\cite{AjwaniFr07,AjwaniFrMe06,AlpernHoRo90,HaeuplerKaMa08,Katriel04,KatrielBo05,KavithaMa07,LiuCh07,Marchetti-SpaccamelaNaRo96,PearceKe04,PearceKeHa03,HaeuplerKaMa?}. For a thorough discussion of this work
see~\cite{HaeuplerKaMa08,HaeuplerKaMa?}; here we discuss the
heretofore best results and others related to our work.  A classic
result of graph theory is that a directed graph is acyclic if and only
if it has a topological order~\cite{Szpilrajn30}; a more recent
generalization is that the strong components of a directed graph can
be ordered topologically (so that every arc lies within a component or
leads from a smaller component to a larger one)~\cite{HararyNoCa65}.
For static graphs, there are two $O(m)$-time algorithms to find a
cycle or a topological order: repeated deletion of vertices with no
predecessors~\cite{Knuth73,KnuthSz74} and depth-first
search~\cite{Tarjan72}: the reverse postorder~\cite{Tarjan74} defined
by such a search is a topological order if the graph is acyclic.
Depth-first search extends to find the strong components and a
topological order of them in $O(m)$ time~\cite{Tarjan72}

For incremental cycle detection, topological ordering, and strong
component maintenance, there are two known fastest algorithms, one
suited to sparse graphs, the other suited to dense graphs.  Both are
due to Haeupler et al.~\cite{HaeuplerKaMa08,HaeuplerKaMa?}.
Henceforth we denote the coauthors of these papers by HKMST.  The HKMST
sparse algorithm takes $O(m^{3/2})$ time for $m$ arc additions; the
HKMST dense algorithm takes $O(n^{5/2})$ time.  Both of these
algorithms use two-way search; each is a faster version of an older
algorithm.  These algorithms, and the older ones on which they are
based, bound the total running time by counting the number of arc
pairs or vertex pairs that become related as a result of arc
insertions.  The HKMST sparse algorithm uses a somewhat complicated
dynamic list data structure~\cite{DietzSl87,BenderCoDe02} to represent
a topological order, and it uses either linear-time selection or
random sampling to guide the searches.  There are examples on which
the algorithm takes $\Omega(nm^{1/2})$ time, so its time bound is
tight for sparse graphs.  The time bound of the HKMST dense algorithm
is not known to be tight, but there are examples on which it takes
$\Omega(n^2 2^{\sqrt{2 \lg n}})$~\cite{HaeuplerKaMa?}.

Our approach to incremental cycle detection and the related problems
is different.  We maintain a weak topological numbering and use it to
facilitate cycle detection.  Our algorithms pay for cycle-detecting
searches by increasing the numbers of appropriate vertices; a bound on
the numbers gives a bound on the running time.  One insight is that
the size function is a weak topological numbering.  Unfortunately,
maintaining this function as arcs are inserted seems to be expensive.
But we are able to maintain in $O(n^2\log n)$ time a weak topological
numbering that is a lower bound on size.  This gives an incremental
cycle detection algorithm with the same running time, substantially
improving the time bound of the HKMST dense algorithm.  Our algorithm
uses one-way rather than two-way search.  For sparse graphs, we use a
two-part numbering scheme. The first part is a scaled lower bound on
size, and the second part breaks ties.  This idea yields an algorithm
with a running time of $O(\min\{m^{1/2}, n^{2/3}\}m)$. Our algorithm is
substantially simpler than the HKMST sparse algorithm and
asymptotically faster on sufficiently dense graphs.  The $O(n^2\log
n)$ algorithm appeared previously in~\cite{BenderFiGi09}, but the
other algorithm is new to this paper. 

The remainder of our paper consists of four sections.  \secref{sparse}
describes the two versions of our cycle-detection algorithm for sparse
graphs.  \secref{dense} describes our cycle-detection algorithm for
dense graphs.  \secref{extensions} describes several simple extensions
of the algorithms.  \secref{components} extends the algorithms to
maintain the strong components of the graph as arcs are inserted
instead of stopping as soon as a cycle exists.  The extensions in
\secreftwo{extensions}{components} preserve the asymptotic time bounds
of the algorithms. \secref{conc} contains concluding remarks.

\secput{sparse}{A Two-Way-Search Algorithm for Sparse Graphs}

Our algorithm for sparse graphs uses two-way search.  Unlike the
entirely symmetric forward and backward searches in the HKMST sparse
algorithm, the two searches in our algorithm have different functions.
Also unlike the HKMST sparse algorithm, our algorithm avoids the use
of a dynamic list data structure, and it does not use selection or
random sampling: all of its data structures are simple, as is the
algorithm itself.

We use a two-part numbering scheme whose lexicographic order is
topological.  Specifically, we partition the vertices into levels.  We
maintain a weak topological numbering of the levels; within each
level, we give the vertices indices ordered topologically.  Each
backward search proceeds entirely within a level.  If the search takes
too long, we stop it and increase the level of a vertex.  This bounds
the backward search time.  Each forward search traverses only arcs
that lead to a lower level, and it increases the level of each vertex
visited.  An overall bound on such increases gives a bound on the time
of all the forward searches.  If the backward and forward searches do
not detect a cycle, we update vertex indices to restore topological
order.  To facilitate this, we make the searches depth-first.

Each vertex $v$ has a \defn{level} $k(v)$ and an \defn{index} $i(v)$.
Indices are distinct.  While the graph remains acyclic, lexicographic
order on level and index is a topological order.  That is, if $(x, y)$
is an arc, $k(x), i(x) < k(y), i(y)$.  Levels are positive integers and
indices are negative integers.  We make indices negative because newly
assigned indices must be smaller than old ones.  An alternative is to
maintain the negatives of the indices and reverse the sense of all
index comparisons.  Initially, each vertex $v$ has $k(v) = 1$ and $i(v)$ an
integer between $-n$ and $-1$ inclusive, distinct for each vertex.

In addition to levels and indices, we maintain a variable $\id{index}$
equal to the smallest index assigned so far.  To represent the graph,
we maintain for each vertex $v$ the set $\id{out}(v)$ of outgoing arcs
$(v, w)$ and the set $\id{in}(v)$ of incoming arcs $(u, v)$ such that
$k(u) = k(v)$.  Initially $\id{index} = -n$ and all incident arc sets
are empty.  Each backward search marks the vertices it visits.
Initially all vertices are unmarked.  To bound backward searches, we
count arc traversals.  Let $\Delta = \min\set{m^{1/2}, n^{2/3}}$.
Recall that we denote a list with square brackets ``[ ]'' around its
elements and list concatenation with ``\&.''

The algorithm for inserting a new arc $(v,w)$ consists of the
following steps:

\noindent \textbf{Step 1 (test order)}: If $k(v), i(v) < k(w),i(w)$ go to Step 5
(lexicographic order remains topological).

\noindent \textbf{Step 2 (search backward)}: Let $B \gets [\ ]$, $F
\gets [\ ]$, and $\id{arcs} \gets 0$, where $\gets$ denotes the
assignment operator.  Do $\proc{Bvisit}(v)$, where mutually recursive
procedures \proc{Bvisit} and \proc{Btraverse} are defined as follows:

\begin{codebox}
  \Procname{$\procdecl{Bvisit}(y)$}
  \li mark $y$
  \li \For $(x,y) \in \id{in}(y)$
  \li \Do $\proc{Btraverse}(x,y)$ 
      \End
  \li $B \gets B \& [y]$
\end{codebox}
\begin{codebox}
  \Procname{$\proc{Btraverse}(x,y)$}
  \li \If $x = w$
  \li \Then stop the algorithm and report the detection of a cycle.
      \End
  \li $\id{arcs} \gets \id{arcs} + 1$  
  \li \If $\id{arcs} \geq \Delta$ 
  \li \Then (The search ends, having traversed at least
  $\Delta$ arcs without reaching $w$.)
  \li       $k(w) \gets k(v) + 1$
  \li       $\id{in}(w) \gets \set{}$
  \li       $B \gets [\ ]$
  \li       unmark all marked vertices
  \li       go to Step 3 (aborting the backward search)
      \End
  \li \If $x$ is unmarked
  \li \Then $\proc{Bvisit}(x)$
      \End
\end{codebox}

If the search ends without detecting a cycle or traversing at least
$\Delta$ arcs, test whether $k(w) = k(v)$.  If so, go to Step 4; if
not, let $k(w) = k(v)$ and $\id{in}(w) = \set{}$.  

\noindent \textbf{Step 3 (forward search)}: Do $\proc{Fvisit}(w)$,
  where mutually recursive procedures \proc{Fvisit} and
  \proc{Ftraverse} are defined as follows:

\begin{codebox}
  \Procname{$\procdecl{Fvisit}(x)$}
  \li \For $(x,y) \in \id{out}(x)$
  \li \Do $\proc{Ftraverse}(x,y)$
      \End
  \li $F \gets [x] \& F$
\end{codebox}
\ 

\begin{codebox}
  \Procname{$\procdecl{Ftraverse}(x,y)$}
  \li \If $y=v$ or $y$ is in $B$
  \li \Then stop the algorithm and report the detection of a cycle
      \End
  \li \If $k(y) < k(w)$
  \li \Then $k(y) \gets k(w)$
  \li       $\id{in}(y) \gets \set{}$
  \li       $\proc{Fvisit}(y)$
      \End
  \li \Comment{Now, $k(y) \geq k(w)$}
  \li \If $k(y) = k(w)$
  \li \Then add $(x,y)$ to $\id{in}(y)$
      \End
\end{codebox}

\noindent \textbf{Step 4 (re-index)}: Let $L = B \& F$. While $L$ is
nonempty, let $\id{index} = \id{index} - 1$, delete the last vertex
$x$ on $L$, and let $i(x) = \id{index}$.  

\noindent \textbf{Step 5 (insert arc)}: Add $(v,w)$ to $\id{out}(v)$.
If $k(v) = k(w)$, add $(v,w)$ to $\id{in}(w)$.

\begin{theorem}
  If a new arc creates a cycle, the insertion algorithm stops and
  reports a cycle.  If not, lexicographic order on level and index is
  a topological order.\thmlabel{sparsecorrect}
\end{theorem}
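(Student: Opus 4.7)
The plan is to establish the two claims of the theorem separately, maintaining as a background invariant---one that holds before the insertion and that the proof must reestablish afterwards---that lex order on $(k,i)$ is topological and that $\id{in}(y)$ contains exactly the same-level predecessors of $y$.

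Soundness of cycle reporting is the easy half: I would trace the three places a cycle is reported. The $x=w$ test in $\proc{Btraverse}$ exhibits a same-level backward path from $w$ to $v$; the $y=v$ test in $\proc{Ftraverse}$ exhibits a forward path from $w$ to $v$; and the $y\in B$ test combines a forward path $w\to y$ with the backward path $y\to v$ recorded in $B$. In every case the new arc $(v,w)$ closes the path into a cycle.

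For completeness I would split on whether $k(w)=k(v)$ or $k(w)<k(v)$ before the insertion, and on whether the backward search aborts. When $k(w)=k(v)$, any path from $w$ to $v$ lies entirely at level $k(v)$ (levels are monotone along paths) and is therefore encoded in $\id{in}$; a completed backward search then reaches $w$, while an aborted one raises $k(w)$ to $k(v)+1$, so that the ensuing forward search relevels along the path and fires the $y=v$ check. When $k(w)<k(v)$, the forward search (with $k(w)$ now $k(v)$ or $k(v)+1$) follows the path until it meets a vertex $y$ already at level $k(v)$; since $y$ has a same-level path to $v$, either $y=v$ or $y\in B$, triggering the cycle check. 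The one technical point is that, in the non-aborted cases, $B$ really does contain every level-$k(v)$ vertex with a same-level path to $v$, which follows directly from the $\id{in}$ invariant.

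For the topological-order claim I would reduce to two sub-claims: (i) the list $L=B\,\&\,F$ is itself in topological order, and (ii) re-indexing together with the level changes preserves lex order on every arc with at least one endpoint in $L$ (arcs disjoint from $L$ are untouched). Claim (i) follows because $B$ is the post-order of backward DFS and $F$ the reverse post-order of forward DFS, each topological on its sub-dag; the only possible violation is an arc from $F$ to $B$, but such an arc combined with the backward path to $v$, the forward path from $w$, and the new arc $(v,w)$ would form a cycle, which is impossible by hypothesis. Claim (ii) then falls out from two monotonicity facts: newly raised levels strictly exceed the old ones (handling any cross-arc whose target had its level raised to $k(w)$, since any same-level predecessor not in $F$ would have satisfied $k(x)\le k(y)_{\mathrm{old}}<k(w)$), and new indices are strictly smaller than every previously assigned index (handling same-level cross-arcs from $L$ outward); a same-level cross-arc into $B$ from outside is impossible because the non-aborted backward search would have put the source in $B$. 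Within $L$, popping from the tail gives the earliest-popped vertex the smallest new index, which matches $L$'s topological orientation; the new arc $(v,w)$ is the final case, where $v$ is the last of $B$ and $w$ the first of $F$, so whenever both are re-indexed $v$ is popped right after $w$ and gets the smaller index, and otherwise the levels already separate $v$ and $w$ strictly. I expect the main obstacle to be less conceptual than organisational: enumerating the subcases of (ii) by which of $B$, $F$, or outside-$L$ each endpoint belongs to, and by whether its level was raised, without losing track of which lexicographic comparisons depend on the ``new indices beat all old'' fact.
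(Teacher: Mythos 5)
Your proposal follows essentially the same route as the paper: soundness by inspecting the three report points, completeness by casing on levels and on whether the backward search aborts, and preservation of the topological invariant via an arc-by-arc case analysis keyed to membership in $B$, $F$, or neither, together with the monotonicity of new indices and raised levels. Your justification of why $L = B\,\&\,F$ is in topological order (post-order of backward DFS, reverse post-order of forward DFS, no $F$-to-$B$ arcs) is actually spelled out more than in the paper, which simply asserts it.

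One small slip in the completeness half: in the subcase $k(w) < k(v)$ with an aborted backward search, $B$ is emptied and $k(w)$ becomes $k(v)+1$. A cycle vertex $y$ at the old level $k(v)$ then has a same-level path to $v$ but is \emph{not} in $B$ and is not $v$; your inference ``either $y=v$ or $y\in B$'' fails for it. What actually happens is that the forward search raises $y$ to $k(v)+1$ and keeps going, and the cycle check fires only when it finally traverses an arc into $v$ itself---exactly as in the $k(w)=k(v)$-abort subcase you do handle correctly. The paper avoids this by observing uniformly that every cycle vertex other than $w$ and the vertices in $B$ has level strictly below $k(w)$, so the forward search visits them all and stops only at $v$ or at $B$. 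Folding your two abort subcases together in that style would close the hole.
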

\begin{proof}
  By inspection, the algorithm correctly maintains the incident arc sets
  and the value of $\id{index}$.  We prove the theorem by induction on
  the number of arc insertions. Initially, lexicographic order on
  level and index is a topological order since there are no arcs:
  \emph{any} total order is topological. Suppose the lexicographic
  order is topological just before the insertion of an arc $(v,w)$.
  If $k(v), i(v) < k(w),i(w)$ before the insertion, then lexicographic
  order on level and index remains topological.  Thus assume
  $k(v),i(v) > k(w),i(w)$.  

  Clearly, if the algorithm stops and reports a cycle, there is one.
  Suppose the insertion of $(v,w)$ creates a cycle.  Such a cycle
  consists of the arc $(v,w)$ and a pre-existing path from $w$ to $v$,
  along which levels are nondecreasing before he insertion.  If $v$
  and $w$ have the same level, then all vertices on the path have the
  same level, and either the backward search will traverse the entire
  path and report a cycle, or it will report a different cycle, or it
  will stop, $w$ will increase in level, and the algorithm will
  proceed to Step 3.  If $v$ has larger level than $w$, $w$ will
  increase in level in Step 2, and the algorithm will do Step 3.
  Suppose the algorithm does Step 3.  At the beginning of this step,
  vertex $w$ has maximum level on the cycle, and $B$ is the set of
  vertices from which $v$ is reachable by a path all of whose vertices
  have level $k(w)$.  (Either $k(w) = k(v)$, in which case $v$ is in
  $B$, or $k(w) = k(v)+1$, and $B = [\ ]$.)  Every vertex on the cycle
  that is not $w$ and not in $B$ must have level less than $k(w)$, and
  the forward search from $w$ will eventually visit each such vertex,
  traversing the cycle forward, until traversing an arc $(x,y)$ with
  $y=v$ or $y$ in $B$ and reporting a cycle.  We conclude that the
  algorithm reports a cycle if and only if an arc insertion creates
  one.  

  Suppose the insertion of $(v,w)$ does not create a cycle.  After the
  backward search stops, $B$ contains the vertices from which $v$ is
  reachable by a path all of whose vertices have level $k(w)$.  Also,
  if $(x,y)$ is an arc with $x$ not in $B$, but $y$ in $B$, $k(x) <
  k(y)$.  Step 3 increases to $k(w)$ the level of every vertex in
  $F$.  After the re-indexing in Step 4, consider any arc $(x,y)$.  If
  neither $x$ nor $y$ is in $B \& F$, $k(x), i(x) < k(y),i(y)$, since
  this was true before the insertion.  If both $x$ and $y$ are in $B \&
  F$, $k(x),i(x) < k(y),i(y)$ because $B \& F$ is in topological
  order.  If $y$ is in $B$ but $x$ is not in $B \& F$, then $k(x) <
  k(y)$.  If $y$ is in $F$ but $x$ is not in $B \& F$, then $k(x) <
  k(y)$ since $y$ increased in level but $x$ did not, and $k(x) \leq
  k(y)$ before the arc insertion.  If $x$ is in $B \& F$ but $y$ is not
  in $B\&F$, then $k(x) \leq k(y)$ and $i(x) < i(y)$ after the
  insertion.  We conclude that after the insertion, lexicographic
  order on level and index is topological
\end{proof}

\begin{lemma}
  The algorithm assigns no index less than $-nm - n$.
  \lemlabel{minindex}
\end{lemma}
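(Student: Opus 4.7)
The plan is to bound how far $\id{index}$ can decrease. Initially $\id{index} = -n$, and the only place $\id{index}$ changes is inside Step 4, where it decreases by exactly one for each element appended to $L = B \& F$. So the lemma reduces to showing that, summed over the (at most) $m$ arc insertions, the total length of $L$ is at most $nm$; the final value of $\id{index}$ is then at least $-n - nm$, and no assigned index can be smaller.

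The key single-insertion claim is $|B| + |F| \leq n$; summing over the $m$ insertions immediately yields the required bound. I would prove this claim from three subfacts: (i) each vertex appears at most once in $B$; (ii) each vertex appears at most once in $F$; (iii) $B$ and $F$ are disjoint. Given (i)--(iii), $B$ and $F$ together contain $|B| + |F|$ distinct vertices out of $n$ total, so $|B| + |F| \leq n$.

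Subfact (i) is immediate from the marking mechanism: $\proc{Bvisit}(y)$ marks $y$ before appending it to $B$, and $\proc{Btraverse}$ invokes $\proc{Bvisit}$ only on unmarked vertices. Subfact (ii) follows from the level test in $\proc{Ftraverse}$: a recursive call $\proc{Fvisit}(y)$ is made only when $k(y) < k(w)$, and the immediately preceding assignment $k(y) \gets k(w)$ precludes any second invocation on the same $y$; the outermost call $\proc{Fvisit}(w)$ is unique because no step in Step 3 modifies $k(w)$.

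I expect subfact (iii) to be the main obstacle, because $B$ and $F$ are built by separate subroutines with no shared bookkeeping. For any $y \neq w$ that ends up in $F$, the launching call $\proc{Fvisit}(y)$ came from inside $\proc{Ftraverse}$, whose very first line halts the algorithm (reporting a cycle) whenever $y \in B$; so once Step 4 is reached, no such $y$ lies in $B$. The remaining case is the outermost $\proc{Fvisit}(w)$, which is called unconditionally. For this I would invoke the symmetric test in $\proc{Btraverse}$: had $\proc{Bvisit}(w)$ ever been executed during Step 2, it would have been launched from some $\proc{Btraverse}(w, z)$, whose first line halts the algorithm on the condition $x = w$, contradicting the assumption that Step 4 is reached. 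Hence $w \notin B$, completing the disjointness argument.

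Putting the pieces together, $|B| + |F| \leq n$ per insertion, so $\id{index}$ decreases by at most $nm$ in total, and no index ever assigned is less than $-n - nm$.
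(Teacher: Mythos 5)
Your argument is correct and follows the same route as the paper: bound the decrease of $\id{index}$ per insertion by $n$ and sum over $m$ insertions. The paper simply asserts this per-insertion bound, whereas you supply the justification (each vertex appears at most once in $B$, at most once in $F$, and the sets are disjoint because $\proc{Ftraverse}$ halts on any $y \in B$ and $w$ itself can never enter $B$), which is a sound and complete filling-in of the step the paper leaves implicit.
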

\begin{proof}
  All initial indices are at least $-n$.  Each arc insertion decreases
  the minimum index by at most $n$, so after $m$ insertions the minimum
  index is at least $-nm - n$.
\end{proof}

\begin{lemma}
  No vertex level exceeds $\min\set{m^{1/2},n^{2/3}} + 2$.  
  \lemlabel{arcmaxlevel}
\end{lemma}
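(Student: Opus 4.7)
The plan is to bound the maximum level by tracing how level increases occur, and to derive two parallel lower bounds (one in terms of arcs, one in terms of ancestors) that together yield $\ell \leq \Delta + 2$.

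First, observe that a vertex's level can only increase, and every increase occurs either in Step 2 (where $k(w)$ becomes $k(v)+1$) or in Step 3 (where $k(y)$ becomes the value $k(w)$ already assigned earlier in the current insertion). Consequently, the first time any vertex reaches a new maximum level $\ell \geq 2$ must be in Step 2, triggered by an abort of the backward search from some $v$ with $k(v) = \ell - 1$. At the moment of abort, $\Delta$ arcs have been traversed, and all of them lie in $\id{in}$ sets of vertices at level $\ell - 1$. Let $B$ be the set of vertices visited by this search. Because $B$ spans a simple directed subgraph that contains these $\Delta$ arcs, $|B|(|B|-1) \geq \Delta$, so $|B| \geq \lceil \sqrt{\Delta}\rceil$; moreover every $y \in B$ is at level $\ell - 1$ and is an ancestor of $v$.

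I would then induct on $\ell$ to prove that whenever a vertex first reaches level $\ell \geq 2$:
(a) the graph contains at least $(\ell - 2)\Delta$ arcs, and
(b) some vertex has size at least $(\ell - 2)\sqrt{\Delta} + 1$.
The base case $\ell = 2$ is immediate from the preceding paragraph. For the inductive step with $\ell \geq 3$, the triggering vertex $v$ already had level $\ell - 1$, so the hypothesis applied at the earlier moment when level $\ell - 1$ was first attained provides $(\ell - 3)\Delta$ prior arcs and a vertex with $(\ell - 3)\sqrt{\Delta} + 1$ ancestors; the current abort contributes $\Delta$ additional arcs and $\sqrt{\Delta}$ additional ancestors of $v$. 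Combining yields (a) and (b) at level $\ell$. Upper-bounding (a) by $m$ and (b) by $n$ gives $\ell - 2 \leq m / \Delta$ and $\ell - 2 \leq n / \sqrt{\Delta}$, which for $\Delta = m^{1/2}$ and $\Delta = n^{2/3}$ respectively both simplify to $\ell \leq \Delta + 2$, establishing the lemma.

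The main obstacle is the disjointness needed so that the witnesses from the inductive hypothesis add to those of the current abort rather than overlap. For part (a), one observes that the $\Delta$ currently-traversed arcs have both endpoints at level $\ell - 1$ at the present moment, whereas the prior $(\ell - 3)\Delta$ witness arcs were at a strictly lower level at the prior moment; one must argue that an arc cannot have been charged to both levels, using the fact that an arc joins an $\id{in}$ set only when its endpoints' levels agree and that $\id{in}$ sets are reset when a level changes. Part (b) requires an analogous argument tracking when a vertex first became an ancestor and at what level. The subtlety is that Step 3 can simultaneously promote many descendants and that Step 5 inserts the arc only after level updates, so the timing of ancestor acquisition must be handled carefully. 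This careful bookkeeping is the technical heart of the proof.
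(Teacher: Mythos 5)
Your quantitative skeleton matches the paper's: each level above $1$ is first created by an aborted backward search, which certifies at least $\Delta$ traversed arcs with both ends on the level below and hence at least $\sqrt{\Delta}$ distinct vertices there, and the lemma follows from $\ell-2\le m/\Delta$ and $\ell-2\le n/\sqrt{\Delta}$ once these certificates are disjoint across levels. But the disjointness is not a deferrable technicality --- it is the entire content of the proof --- and the route you sketch for it fails. You propose to charge witnesses to the level at which they were traversed and to note that the prior witness arcs ``were at a strictly lower level at the prior moment'' while the current ones ``have both endpoints at level $\ell-1$ at the present moment.'' These two facts are perfectly compatible with the witness sets overlapping: levels only increase, and $\proc{Ftraverse}$ (and Step 5) re-add an arc to an $\id{in}$ set whenever its endpoints' levels become equal again, so a single arc can be traversed by aborted backward searches at many different levels over the run. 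Your inductive hypothesis (a) asserts only a \emph{count} of arcs present at the earlier moment, with no property identifying which arcs they are, so there is nothing to intersect the new $\Delta$ arcs against. Part (b) has the same defect plus a second one: the vertex with $(\ell-3)\sqrt{\Delta}+1$ ancestors supplied by the hypothesis need not be an ancestor of the current $v$, so its ancestors and the $\sqrt{\Delta}$ new ancestors of $v$ need not be ancestors of any common vertex, and size does not accumulate the way your step requires.

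The paper avoids time-indexed charging altogether with a static argument. Fix the topological order just before the last insertion, and for each level $k>1$ take $w$ to be the \emph{lowest vertex in that fixed order} ever assigned level $k$. Minimality forces $w$ to have received level $k$ from an abort (every other assignment copies the level of an ancestor, which would be lower in the order), and it forces the ends of the $\Delta$ witness arcs --- all ancestors of $v$, hence of $w$, hence below $w$ in the fixed order --- to still be on level $k-1$ at the final moment. Disjointness across levels is then automatic because each arc end has a single final level. To repair your induction you would need an invariant strong enough to attribute, for every level $j<\ell$, a set of $\Delta$ arcs and $\sqrt{\Delta}$ vertices \emph{permanently} to level $j$; the per-moment bookkeeping you describe cannot supply this.
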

\begin{proof}
  Fix a topological order just before the last arc insertion.  Let $k
  > 1$ be a level assigned before the last arc insertion, and let $w$
  be the lowest vertex in the fixed topological order assigned level
  $k$.  For $w$ to be assigned level $k$, the insertion of an arc $(v, w)$
  must cause a backward search from $v$ that traverses at least $\Delta$
  arcs both ends of which are on level $k - 1$.  All the ends of these
  arcs must still be on level $k - 1$ just before the last insertion.
  Thus these sets of arcs are distinct for each $k$, as are their sets
  of ends.  Since there are only $m$ arcs, there are most $m/\Delta$
  distinct values of $k$.  Also, for each $k$ there must be at least
  $\sqrt{\Delta}$ distinct arc ends, since there are no loops or multiple
  arcs.  Since there are only $n$ vertices, there are at most
  $n/\sqrt{\Delta}$ distinct values of $k$.  It follows that no vertex level
  exceeds $\min\set{m/\Delta, n/\sqrt{\Delta}} + 2$, which gives the lemma.
\end{proof}

\begin{theorem}
  The insertion algorithm takes $O(\min\set{m^{1/2},n^{2/3}}m)$ time
  for $m$ arc insertions. 
  \thmlabel{sparsetime}
\end{theorem}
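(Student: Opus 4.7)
The plan is to bound the work of each of the five steps separately and sum across all $m$ insertions, showing the total is $O(m\Delta)$ with $\Delta = \min\set{m^{1/2}, n^{2/3}}$. Steps 1 and 5 are $O(1)$ per insertion, contributing $O(m)$, and so are free.

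For Step 2, the guard $\id{arcs} \geq \Delta$ caps the number of \proc{Btraverse} calls in a single backward search at $\Delta$, and the number of \proc{Bvisit} calls is at most one more than that (each non-initial \proc{Bvisit} is triggered by a \proc{Btraverse}). Since each \proc{Bvisit}(y) does only $O(1)$ work plus the iteration over $\id{in}(y)$ (which is absorbed into the \proc{Btraverse} count), a single backward search costs $O(\Delta)$ and the total is $O(m\Delta)$.

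The critical step is Step 3. I would argue that every call \proc{Fvisit}(y) is immediately preceded by a strict increase of $k(y)$. For recursive calls this is transparent: \proc{Ftraverse} only recurses in the branch $k(y) < k(w)$, and does so after setting $k(y) \gets k(w)$. For the initial call \proc{Fvisit}(w) at the start of Step 3, I would do a short case analysis on Step 2's exits: if Step 2 aborted, $k(w)$ was raised to $k(v)+1$; if Step 2 completed without aborting and $k(w) < k(v)$, the algorithm then raised $k(w)$ to $k(v)$; and if Step 2 completed with $k(w) = k(v)$, the algorithm jumped to Step 4 and never reached Step 3. Thus every entry into Step 3 increases $k(w)$. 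Combined with \lemref{arcmaxlevel}, which caps every vertex's level at $\Delta + 2$, each vertex is the argument of at most $\Delta + 1$ \proc{Fvisit} calls, so there are $O(n\Delta)$ such calls in total. Each \proc{Fvisit}(x) costs $O(1 + |\id{out}(x)|)$, and since arcs are never removed from $\id{out}(x)$, a given arc $(x,y)$ is scanned by \proc{Ftraverse} at most once per \proc{Fvisit}(x) call, hence at most $\Delta + 1$ times over the entire execution. Summing, Step 3 costs $O(n\Delta) + (\Delta+1)m = O(m\Delta)$ in total.

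For Step 4, the per-insertion cost is $|B|+|F|$. Here $|B|$ equals the number of \proc{Bvisit} calls in that Step 2, at most $\Delta+1$, contributing $O(m\Delta)$ over all insertions; and the sum of $|F|$ over all insertions equals the total number of \proc{Fvisit} calls, which is $O(n\Delta)$ by the argument above. Adding every contribution gives $O(m\Delta) = O(m \cdot \min\set{m^{1/2}, n^{2/3}})$, proving the theorem.

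The main obstacle I anticipate is the level-increase charging for Step 3, because it requires a careful case analysis of the Step 2 exits to confirm that even the initial \proc{Fvisit}(w) can be charged to an increase of $k(w)$; once that is established, \lemref{arcmaxlevel} and a routine amortization over out-degrees finish the proof.
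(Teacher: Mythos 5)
Your proof is correct and takes essentially the same approach as the paper: the backward search is bounded by $O(\Delta)$ per insertion via the \id{arcs} counter, and the forward-search and re-indexing work is amortized against vertex level increases, which \lemref{arcmaxlevel} caps at $O(\Delta)$ per vertex. The only detail you leave implicit is the paper's appeal to \lemref{minindex} to guarantee that indices remain polynomial in $n$, so that index assignments and comparisons cost $O(1)$ in the word-RAM model.
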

\begin{proof}
  By \lemreftwo{minindex}{arcmaxlevel}, all levels and indices are
  polynomial in $n$, so assignments and comparisons of levels and
  indices take $O(1)$ time.  Each backward search takes $O(\Delta) =
  O(\min\set{m^{1/2},n^{2/3}})$ time.  The time spent adding and
  removing arcs from incidence sets is $O(1)$ per arc added or
  removed.  An arc can be added or removed only when it is inserted
  into the graph or when the level of one of its ends increases.  By
  \lemref{arcmaxlevel}, this can happen at most
  $O(\min\set{m^{1/2},n^{2/3}})$ times per arc.  The time for a
  forward search is $O(1)$ plus $O(1)$ per arc $(x,y)$ such that $x$
  increases in level as the result of the arc insertion that triggers
  the search.  By \lemref{arcmaxlevel}, this happens
  $O(\min\set{m^{1/2},n^{2/3}})$ times per arc.
\end{proof}

\ 

The space needed by the algorithm is $O(m)$.

\begin{theorem}
  For any $n$ and $m$ with $m \leq n(n-1)/2$, there exists a
  sequence of $m$ arc insertions causing the algorithm to run
  in $\Omega(\min\set{m^{1/2},n^{2/3}}m)$ total time.
\end{theorem}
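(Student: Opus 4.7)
The plan is to exhibit an explicit adversarial insertion sequence that forces the algorithm to spend $\Omega(\Delta)$ time on each of $\Theta(m)$ insertions, where $\Delta = \min\{m^{1/2},n^{2/3}\}$. The core gadget is a length-$\Delta$ chain $v_0 \to v_1 \to \cdots \to v_{\Delta-1}$ at a single level, built by inserting its arcs in index order at no cost. Once the chain is in place, inserting an arc from $v_{\Delta-1}$ to a target $u$ indexed before $v_0$ forces a backward search that walks the entire chain before aborting---$u$ is not on the chain, and by construction cannot be reached through $\id{in}(\cdot)$ at the current level---so the insertion costs $\Omega(\Delta)$, uses a single new arc, and pushes $u$ to the next level.

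For $m$ small relative to $n$, one gadget suffices: $\Delta$ chain arcs followed by $m - \Delta = \Theta(m)$ triggers into distinct fresh targets already yields $\Omega(m\Delta)$ work. For larger $m$ the construction iterates across levels. The targets promoted in phase $p$ are, thanks to the re-indexing done in Step~4, linearly ordered at level $p+1$, so we can carve out a new chain and a new batch of triggers from them and repeat at level $p+2$, and so on. This continues for up to $\Theta(\Delta)$ phases, matching the cap provided by Lemma~\lemref{arcmaxlevel}. When a length-$\Delta$ chain becomes too expensive in vertices at the dense end of the range, we replace it by a dense acyclic sub-DAG on $\Theta(\sqrt{\Delta})$ vertices with $\Theta(\Delta)$ arcs, so that backward searches still abort after $\Delta$ traversals but fewer vertices are tied up per phase.

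The main obstacle will be the three-way budget juggle: each phase should cost $\Theta(\Delta)$ per trigger, but it also consumes some arcs and vertices for its gadget, and Lemma~\lemref{arcmaxlevel} caps the number of phases at $\Theta(\Delta)$. The calculation I need to carry out shows that a suitable choice of gadget size, triggers-per-phase, and number of phases fits within both the $m$ arc and $n$ vertex budgets while delivering $\Omega(\Delta m)$ total work across the whole range $m \le n(n-1)/2$; in particular, one must verify that at the densest end the trigger arcs, rather than the gadget-construction arcs, dominate the arc count. Once these parameters are fixed, each of the remaining steps---checking that every scheduled trigger indeed causes a $\Delta$-arc backward search in spite of the end-of-phase re-indexing, and summing $\Omega(\Delta)$ over $\Theta(m)$ triggers---is a direct invariant-chase and arithmetic check.
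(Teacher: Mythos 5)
There is a genuine gap, and it is not the one you flagged. Your construction charges essentially all of the $\Omega(\Delta m)$ work to \emph{aborting backward searches}. But each such search costs only $O(\Delta)$ and, by design of Step~2, promotes exactly one vertex ($w$) by exactly one level. Since \lemref{arcmaxlevel} caps every level at $O(\Delta)$, the total number of (vertex, level) promotions over the whole run is $O(n\Delta)$, so the total work obtainable from aborting backward searches is $O(n\Delta^2)$. At the dense end this is $O(n\cdot n^{4/3})=O(n^{7/3})$, whereas the target is $\Omega(m\Delta)=\Omega(n^{8/3})$ when $m=\Theta(n^2)$. So for all $m=\omega(n^{5/3})$ your mechanism provably cannot reach the bound, no matter how the phases, trigger counts, and gadget sizes are tuned. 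Your proposed fix for the "budget juggle" --- replacing the length-$\Delta$ chain by a $\Theta(\sqrt{\Delta})$-vertex, $\Theta(\Delta)$-arc gadget --- shrinks the wrong budget: the binding constraint is the number of promotions a trigger buys (one), not the vertex footprint of the gadget.

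The paper's construction escapes this cap by getting the bulk of the work from \emph{forward} searches. It builds a "main clique" on $\Theta(\sqrt{m})$ vertices with $\Theta(m)$ arcs, plus $\Theta(\Delta)$ "anchor cliques" that are first walked up to levels $2,3,\ldots,\Theta(\Delta)$ via aborting backward searches (your mechanism, used only for setup). Then each of $\Theta(\Delta)$ cheap insertions from an anchor clique into the main clique raises the main clique's level by one and forces the forward search to retraverse all $\Theta(m)$ clique arcs. Here a single level promotion of a clique vertex costs $\Theta(\sqrt{m})$ out-arc scans rather than $O(1)$, which is exactly what lets $\Theta(n\Delta)$ promotions account for $\Theta(\Delta m)$ traversals. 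Some dense structure whose arcs are rescanned on every level increase is indispensable; you will need to add an ingredient of this kind before the rest of your plan (which is plausible for $m=O(n\Delta)$) can cover the full range $m\le n(n-1)/2$.
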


\begin{proof}
  Assume without loss of generality that $m \geq 2n$ and $n$ is
  sufficiently large.  Let the vertices be $1$ through $n$, numbered
  in the initial topological order.  We first add arcs consistent with
  the initial order (so that no reordering takes place) to construct a
  number of cliques of consecutive vertices.  An \defn{$r$-clique} of
  vertices $k$ through $k+r-1$ is formed by adding arc $(i,j)$
  for $i, j$ such that $k \leq i < j \leq k+r-1$.  An $r$-clique
  consists of $r$ vertices and $r(r-1)/2$ arcs.

  Let $r_1 = \floor{\sqrt{m}/2}$.  Construct an $r_1$-clique of the
  first $r_1$ vertices.  This is the \defn{main clique}.  The main clique
  contains at most $n/2$ vertices and at most $m/4$ arcs.  Let $r_2 =
  \ceil{\sqrt{\Delta}+1}$.  Starting with vertex $r_1+1$,
  construct $r_2$-cliques on disjoint sets of consecutive vertices,
  until running out of vertices or until $\floor{m/2}$ arcs have been
  added, including those added to make the main clique.  Each of the
  $r_2$-cliques is an \defn{anchor clique}.  The number of arcs in each
  anchor clique is $O(\Delta)$ and at least $\Delta$.  Number the anchor
  cliques from $1$ though $k$ in increasing topological order.  Then $k =
  \Theta(\Delta)$.  So far there has been no vertex reordering, and all
  vertices have level 1.

  Next, for $j$ from $k-1$ through $1$ in decreasing order, add arcs
  from the last vertex of anchor clique $j+1$ to each vertex of anchor
  clique $j$.  Add these arcs in decreasing topological order with
  respect to the end of the arc that is in anchor clique $j$.  There
  are at most $n/2 \leq m/4$ such arc additions.  Each addition of an
  arc from the last vertex of anchor clique $k$ to a vertex w in
  anchor clique $k-1$ triggers a backward search that traverses at
  least $\Delta$ arcs and causes the level of $w$ to increase from $1$
  to $2$.  Each forward search visits only a single vertex.  Once all
  arcs from anchor clique $k$ are added, all vertices in anchor clique
  $k-1$ have level $2$.  Addition of the arcs from the last vertex of
  anchor clique $k-1$ to the vertices in anchor clique $k-2$ moves all
  vertices in anchor clique $k-2$ to level $3$.  After all the arcs
  between anchor cliques are added, every vertex in anchor clique $j$
  is on level $k+1-j$.  The number of arcs added to obtain these level
  increases is at most $n/2 \leq m/4$.

  Finally, for each anchor clique from $k-2$ through $1$ in decreasing
  order, add an arc from its first vertex in topological order to the
  first vertex in the main clique.  There are at most $n/2 \leq m/4$ such
  arc additions.  Each addition triggers a backward search that visits
  only one vertex, followed by a forward search that traverses all the
  arcs in the main clique and increases the level of all vertices in
  the main clique by one.  These forward searches do $\Theta(\Delta m)$
  arc traversals altogether.  At most $m$ arcs are added during the
  entire construction.
\end{proof}

\secput{dense}{A One-Way-Search Algorithm for Dense Graphs}

The two-way-search algorithm becomes less and less efficient as the
graph density increases; for sufficiently dense graphs, one-way-search
is better.  In this section we present a one-way search algorithm that
takes $O(n^2\log{n})$ time for all arc insertions.   The
algorithm maintains for each vertex $v$ a level $k(v)$ that is a weak
topological numbering satisfying $k(v) \leq \id{size}(v)$.  The
algorithm pays for its searches by increasing vertex levels, using the
following lemma to maintain $k(v) \leq \id{size}(v)$ for all $v$.

\begin{lemma}
  In an acyclic graph, if a vertex $v$ has $j$ predecessors, each of size
  at least $s$, then $\id{size}(v) \geq s + j$.
  \lemlabel{size}
\end{lemma}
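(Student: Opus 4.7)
The plan is to exhibit $s + j$ distinct vertices that can reach $v$. Write $R(x)$ for the set of vertices from which $x$ is reachable, using the convention that the trivial length-zero path counts, so $x \in R(x)$ and $\id{size}(x) = |R(x)|$. Note that $R(v) \supseteq R(u_i)$ for every predecessor $u_i$ of $v$, since every path to $u_i$ can be extended along the arc $(u_i,v)$.

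First I would use acyclicity to topologically sort the $j$ predecessors as $u_1 \prec u_2 \prec \cdots \prec u_j$, and then build the union $U = R(u_1) \cup R(u_2) \cup \cdots \cup R(u_j) \subseteq R(v)$ one reach set at a time. The key claim is that $|U| \geq s + (j-1)$: the first set already has $|R(u_1)| \geq s$ vertices, and at each subsequent step $i \geq 2$ the vertex $u_i$ itself serves as a fresh witness, since $u_i \in R(u_i)$ but $u_i \notin R(u_k)$ for any $k < i$. This last point is the one place where topological order is used: $u_i \in R(u_k)$ would mean there is a path from $u_i$ back to the topologically earlier $u_k$, which is impossible.

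Finally I would close the argument by adding $v$. We have $v \in R(v)$, and $v \notin U$, because $v \in R(u_i)$ would mean a $v$-to-$u_i$ path, which composed with the arc $(u_i,v)$ forms a cycle, contradicting acyclicity. Hence $|R(v)| \geq |U| + 1 \geq s + j$. The only nontrivial step is the new-witness argument for $u_i$; everything else is bookkeeping.
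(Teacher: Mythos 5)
Your proof is correct and is essentially the paper's argument, just presented more explicitly: the paper also topologically orders the predecessors, takes the smallest one $u_1$, and observes that $R(v) \supseteq R(u_1) \cup \{v, u_2, \ldots, u_j\}$ where the added $j$ vertices lie outside $R(u_1)$ because $u_1$ is topologically smallest and acyclicity excludes $v$. Your incremental union $R(u_1) \cup \cdots \cup R(u_j) \cup \{v\}$ is a superset of what the paper counts, and the fresh-witness reasoning ($u_i \notin R(u_k)$ for $k<i$) is the same topological-order observation, specialized in the paper to $k=1$.
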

\begin{proof}
  Order the vertices of the graph in topological order and let $u$ be
  the smallest predecessor of $v$.  Then $\id{size}(v) \geq
  \id{size}(u) + j \geq s + j$.  Here ``$+j$'' counts $v$ and the
  $j-1$ predecessors of $v$ other than $u$.  
\end{proof}

\ 

The algorithm uses \lemref{size} on a hierarchy of scales.  For each
vertex $v$, in addition to a level $k(v)$, it maintains a bound
$b_i(v)$ and a count $c_i(v)$ for each integer $i, 0 \leq i \leq
\floor{\lg n}$, where $\lg$ is the base-2 logarithm.  Initially $k(v)
= 1$ for all $v$, and $b_i(v) = c_i(v) = 0$ for all $v$ and $i$.  To
represent the graph, for each vertex $v$ the algorithm stores the set
of outgoing arcs $(v, w)$ in a heap (priority queue) $\id{out}(v)$,
each arc having a \defn{priority} that is at most $k(w)$.  (This
priority is either $k(w)$ or a previous value of $k(w)$.)  Initially
all such heaps are empty.

The arc insertion algorithm maintains a set of arcs $A$ to be traversed,
initially empty.  To insert an arc $(v, w)$, add $(v, w)$ to $A$ and repeat
the following step until a cycle is detected or $A$ is empty:

\ 

\noindent\textbf{Traversal Step:}\vspace{-1em}
\begin{codebox}
  \li delete some arc $(x, y)$ from $A$
  \li \If $y = v$
  \li \Then stop the algorithm and report a cycle
      \End
  \li \If $k(x) \geq k(y)$
  \li \Then $k(y) \gets k(x) + 1$
  \li \Else \Comment{$k(x) < k(y)$}
  \li       $i \gets \floor{\lg(k(y) - k(x))}$
  \li       $c_i(y) \gets c_i(y)+1$
  \li       \If $c_i(y) = 3\cdot 2^{i+1}$
  \li       \Then $c_i(y) \gets 0$
  \li             $k(y) \gets \max\set{k(y), b_i(y) + 3\cdot 2^i}$
  \li             $b_i(y) \gets k(y) - 2^{i+1}$.  
            \End
      \End
  \li delete from $\id{out}(y)$ every arc
with priority at most $k(y)$ and add these arcs to $A$.  
  \li add $(x, y)$ to $out(x)$ with priority $k(y)$.
\end{codebox}

In a traversal step, an arc $(y, z)$ that is deleted from
$\id{out}(y)$ may have $k(z) > k(y)$, because $k(z)$ may have
increased since $(y, z)$ was last inserted into $\id{out}(y)$.
Subsequent traversal of such an arc may not increase k(z).  It is to
pay for such traversals that we need the mechanism of bounds and
counts.

We implement each heap $\id{out}(v)$ as an array of buckets indexed
from $1$ through $n$, with bucket $i$ containing the arcs with
priority~$i$.  We also maintain the smallest index of a nonempty
bucket in the heap.  This index never decreases, so the total time to
increment it over all deletions from the heap is $O(n)$.  The time to
insert an arc into a heap is $O(1)$.  The time to delete a set of arcs
from a bucket is $O(1)$ per arc deleted.  The time for heap operations
is thus $O(1)$ per arc traversal plus $O(n)$ per heap.  Since there
are $n$ heaps, this time totals $O(1)$ per arc traversal plus
$O(n^2)$.  

To analyze the algorithm, we begin by bounding the total number of arc
traversals, thereby showing that the algorithm terminates.  Then we
prove its correctness.  Finally, we fill in one detail of the
implementation and bound the running time.

\begin{lemma} While the graph remains acyclic, the insertion algorithm
  maintains $k(v) \leq \id{size}(v)$ for every vertex $v$.
  \lemlabel{densemaxlevel}
\end{lemma}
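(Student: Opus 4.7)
The plan is to induct on the sequence of events that change any $k(v)$. The base case holds since initially $k(v) = 1 \leq \id{size}(v)$. Each update of $k(y)$ happens inside a traversal of some arc $(x, y)$ and falls into one of two cases. The easy case is when $k(x) \geq k(y)$ and we set $k(y) \gets k(x) + 1$; then $x$ is a predecessor of $y$, and the inductive hypothesis gives $\id{size}(x) \geq k(x)$, so $\id{size}(y) \geq \id{size}(x) + 1 \geq k(x) + 1$, which is the new $k(y)$.

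The hard case is the bucket-reset update $k(y) \gets \max\{k(y), b_i(y) + 3\cdot 2^i\}$ triggered when $c_i(y)$ reaches $3\cdot 2^{i+1}$. If the max keeps the old $k(y)$, the inductive hypothesis already gives the invariant. Otherwise the new $k(y)$ equals $b_i(y) + 3\cdot 2^i$, which by the monotonicity of $k(y)$ is equivalent to saying that $k(y) < b_i(y) + 3\cdot 2^i$ throughout the entire interval between the previous reset of $c_i(y)$ (or initialization) and this one. I will apply \lemref{size} to the $3\cdot 2^{i+1}$ contributions that accumulated in $c_i(y)$ over this interval, via two sub-claims. (A) Every contributing arc $(x, y)$ has $k(x) \geq b_i(y) + 1$ at its contribution time: after the first reset, the reset rule $b_i(y) \gets k(y) - 2^{i+1}$ combined with the monotonicity of $k(y)$ gives $k(y) \geq b_i(y) + 2^{i+1}$ throughout the interval, so the bucket condition $k(y) - k(x) < 2^{i+1}$ yields $k(x) \geq b_i(y) + 1$; before the first reset $b_i(y) = 0$ and $k(x) \geq 1$ is automatic. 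Applying the inductive hypothesis to $x$ at the contribution time and using monotonicity of $\id{size}$ then gives $\id{size}(x) \geq b_i(y) + 1$ at the reset. (B) At least $3\cdot 2^i$ distinct predecessors contribute: the priority discipline forces $k(x)$ to grow past the priority $k(y)$ stored when $(x, y)$ was last placed in $\id{out}(x)$ before $(x, y)$ can re-enter $A$, so if the same arc $(x, y)$ contributes at times $t < t'$ then $k(y)$ must grow by at least $2^i$ between $t$ and $t'$; in the subcase we are in, $k(y)$ ranges over an interval of length less than $2\cdot 2^i$ during the entire interval, so each arc contributes at most twice and the $6\cdot 2^i$ contributions come from at least $3\cdot 2^i$ distinct arcs.

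The main obstacle is sub-claim (B): tracing carefully how the priority/heap bookkeeping controls re-enqueuing of arcs, and handling separately the initial interval, where the zero value of $b_i(y)$ only loosely bounds $k(y)$. Once both sub-claims are in hand, \lemref{size} applied with $s = b_i(y) + 1$ and $j = 3\cdot 2^i$ gives $\id{size}(y) \geq b_i(y) + 1 + 3\cdot 2^i$, which exceeds the new $k(y) = b_i(y) + 3\cdot 2^i$, closing the induction.
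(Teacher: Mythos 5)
Your argument matches the paper's almost exactly: the easy case is handled identically, and in the reset case both proofs apply \lemref{size} to at least $3\cdot 2^i$ distinct predecessors of size at least (roughly) $b_i(y)$, using the heap/priority mechanism to bound how often the same arc $(z,y)$ can be re-traversed within one counting interval. The only stylistic divergence is in how sub-claim (B) is closed: you bound the range of $k(y)$ over the interval (which, as you flag, needs a separate argument for the pre-first-reset phase where $b_i(y)=0$ --- it does go through because a bucket-$i$ traversal already forces $k(y)\geq 2^i+1$), whereas the paper instead argues by contradiction via pigeonhole that three traversals of a single arc chain the inequalities $k(y)\geq k(z)+2^i$ and $k(z)\geq b_i(y)$ to yield $k(y)\geq b_i(y)+3\cdot 2^i$ directly, so the $\max$ does not actually increase $k(y)$; this formulation handles both phases uniformly and avoids the range bound entirely.
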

\begin{proof}
  The proof is by induction on the number of arc insertions.  The
  inequality holds initially.  Suppose it holds just before the
  insertion of an arc $(v,w)$ that does not create a cycle.  Consider
  a traversal step during the insertion that deletes $(x, y)$ from $A$
  and increases $k(y)$.  If $k(y)$ increases to $k(x) + 1$,
  $\id{size}(y) \geq 1 + \id{size}(x) \geq 1 + k(x)$, maintaining the
  inequality for $y$.  The more interesting case is when $c_i(y) =
  3\cdot 2^{i + 1}$ and k(y) increases to $b_i(y) + 3\cdot 2^i$.  Each
  of the increases to $c_i(y)$ since it was last zero corresponds to the
  traversal of an arc $(z, y)$.  When $c_i(y)$ was last zero, $b_i(y)
  = \max\set{0, k(y) - 2^{i + 1}}$.  Since $k(y)$ cannot decrease,
  $b_i(y) \leq k(z) \leq \id{size}(z)$ when this traversal of $(z, y)$
  occurs, since at this time $k(y) - k(z) < \min\set{k(y), 2^{i +
      1}}$.  We consider two cases.  If there were at least $3\cdot
  2^i$ traversals of distinct arcs $(z, y)$ since $c_i(y)$ was last
  zero, then $\id{size}(y) \geq b_i(y) + 3 \cdot 2^i$ by
  \lemref{size}, and the increase in $k(y)$ maintains the inequality
  for $y$.  If not, by the pigeonhole principle there were at least
  three traversals of a single arc $(z, y)$ since $c_i(y)$ was last
  zero.  When each traversal happens, $k(y) - k(z) \geq 2^i$, but each
  of the second and third traversals cannot happen until $k(z)$
  increases to at least the value of $k(y)$ at the previous traversal.
  This implies that when the third traversal happens, $k(y) \geq
  b_i(y) + 3\cdot 2^i$, so $k(y)$ will not in fact increase as a
  result of this traversal.
\end{proof}

\begin{lemma}
  If a new arc $(v, w)$ creates a cycle, the insertion algorithm
  maintains $k(v) \leq \id{size}(v) + n$, where sizes are
  before the addition of $(v, w)$.  
  \lemlabel{densecycle}
\end{lemma}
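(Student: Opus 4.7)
The proof rests on a single observation: during the insertion of $(v, w)$, the algorithm never modifies $k(v)$. Each traversal step begins by deleting an arc $(x, y)$ from $A$ and testing whether $y = v$, halting immediately if so; consequently neither of the level update rules $k(y) \gets k(x) + 1$ and $k(y) \gets \max\{k(y), b_i(y) + 3 \cdot 2^i\}$, nor the bookkeeping updates to $b_i(y)$ and $c_i(y)$, is ever executed with $y = v$.

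Hence $k(v)$ at the moment the cycle is reported equals $k(v)$ immediately before this insertion. Just before the insertion the graph is still acyclic, so \lemref{densemaxlevel} gives $k(v) \leq \id{size}(v)$, with sizes measured in the pre-insertion graph, and the claimed inequality $k(v) \leq \id{size}(v) + n$ follows immediately. The only step that really requires care is verifying the invariance of $k(v)$ from the pseudocode, including confirming that the bookkeeping variables $b_i$ and $c_i$ cannot serve as an indirect side channel for modifying $k(v)$.

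The $+n$ slack in the stated bound is stronger than needed for $v$ itself; I expect it is there because the same inductive template, with this slack, yields the corresponding bound $k(u) \leq \id{size}(u) + n$ for every vertex $u$ touched by the insertion. Extending the argument of \lemref{densemaxlevel} in that direction, the bound $k(x) \leq \id{size}(x) + n$ propagates through $\id{size}(y) \geq \id{size}(x) + 1$ whenever $(x, y)$ is a pre-existing arc, while the single traversal of the newly inserted arc $(v, w)$ costs at most an additive $n$ because $\id{size}(v) \leq n$; the main technical obstacle in that more general statement would be reworking the $b_i/c_i$ pigeonhole step of \lemref{densemaxlevel} so that the extra $n$ of slack is absorbed uniformly rather than compounding through repeated updates.
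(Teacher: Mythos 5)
You have correctly read through the lemma's overloaded notation. If $v$ denoted only the source of the inserted arc, then your observation is exactly right: the traversal step tests $y = v$ and halts before any update to $k(y)$ (or to $b_i(y)$, $c_i(y)$), so $k(v)$ is frozen during this insertion and the bound $k(v) \leq \id{size}(v) \leq \id{size}(v) + n$ is immediate with the $+n$ being dead weight. That would make the lemma useless in its only application (\lemref{densetraversals} needs a $2n$ bound on $k(u)$ for \emph{every} $u$), and the paper's own proof confirms the lemma is uniform over all vertices: it opens with ``$k(v) \leq \id{size}(v)$ for every vertex $v$'' and then tracks the level of the head of each traversed arc. Your sketch of the general argument is in fact the paper's argument: the single traversal of the new arc $(v, w)$ pushes $k(w)$ to at most $k(v) + 1 \leq n + 1 \leq \id{size}(w) + n$; and every subsequent traversal is of an arc that existed in the pre-insertion acyclic graph, because any traversal of an arc with head $v$ halts the algorithm, so $(v, w)$ is never re-added to $A$.

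Where your proposal is weaker than the paper is the alleged ``main technical obstacle'' around the $b_i/c_i$ pigeonhole step. No reworking is needed and there is no compounding. The paper's resolution is a one-line remark: \lemref{size} holds verbatim with $\id{size}$ replaced by $\id{size} + c$ for any constant $c$, since if $j$ predecessors each have shifted size at least $s$, their true sizes are at least $s - c$, giving true size at least $s - c + j$ for $y$, i.e.\ shifted size at least $s + j$. The invariant to be maintained inductively over traversal steps is $k(u) \leq \id{size}(u) + n$ against the \emph{fixed} target function $\id{size}(\cdot) + n$, and the entire proof of \lemref{densemaxlevel} — including the case analysis on whether the $3\cdot 2^i$ increments to $c_i(y)$ came from distinct arcs or from three traversals of one arc — carries over unchanged with this shifted function. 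Slack does not accumulate because the offset is not sliding; it is the same $+n$ throughout.
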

\begin{proof}
  Before the addition of $(v, w)$, $k(v) \leq \id{size}(v)$ for every
  vertex $v$, by \lemref{densemaxlevel}.  Traversal of the arc $(v,
  w)$ can increase $k(v)$ by at most $n$, so the desired inequality
  holds after this traversal.  Every subsequent traversal is of an arc
  other than $(v, w)$: to traverse $(v, w)$, an arc into $v$ must be
  traversed, which results in reporting of a cycle.  Thus the
  subsequent traversals are of arcs in the acyclic graph before the
  addition of $(v, w)$.  The proof of \lemref{densemaxlevel} extends
  to prove that these traversals maintain the desired inequality:
  \lemref{size} holds if the size function is replaced by the size
  plus any constant, in particular by the size plus $n$.
\end{proof}

\begin{lemma}
  The total number of arc traversals over $m$ arc additions is
  $O(n^2\log n)$.  \lemlabel{densetraversals}
\end{lemma}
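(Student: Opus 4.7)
The plan is to split every traversal step into two types according to which branch of the conditional it hits, and bound each type separately. Let a traversal of $(x,y)$ be of \emph{type A} if $k(x) \geq k(y)$ (so $k(y)$ jumps directly to $k(x)+1$), and of \emph{type B} if $k(x) < k(y)$ (so we only touch $c_i(y)$, and maybe raise $k(y)$ via the bound~$b_i$).

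For type A, each such traversal strictly increases $k(y)$. By \lemreftwo{densemaxlevel}{densecycle}, at all times $k(y) \leq \id{size}(y) + n \leq 2n$. Since $k(y)$ is monotone nondecreasing and is bounded by $2n$, each vertex $y$ can be the target of at most $O(n)$ type-A traversals, giving $O(n^2)$ total.

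For type B, I would run a charging argument one pair $(y,i)$ at a time, with $0 \leq i \leq \lfloor \lg n \rfloor$. Each type-B traversal of $(x,y)$ contributes exactly one increment to a single counter $c_i(y)$. A counter is reset only when it reaches $3 \cdot 2^{i+1}$, so it suffices to bound the number of resets of $c_i(y)$. The key observation is that every reset increases $b_i(y)$ by at least $2^i$: before the reset $b_i(y)$ equals some value $b_{\text{old}}$; after the reset we have $k(y) \geq b_{\text{old}} + 3 \cdot 2^i$ and the algorithm sets $b_i(y) = k(y) - 2^{i+1} \geq b_{\text{old}} + 2^i$. Since $b_i(y) \leq k(y) \leq 2n$, each pair $(y,i)$ can reset at most $O(n/2^i)$ times, accounting for $O(n/2^i \cdot 2^{i+1}) = O(n)$ type-B traversals. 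Summing over $i \leq \lfloor \lg n \rfloor$ gives $O(n \log n)$ per vertex, and summing over the $n$ vertices yields $O(n^2 \log n)$.

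Combining the two cases gives the desired $O(n^2 \log n)$ bound. The main obstacle is the bookkeeping for the type-B argument: one has to check carefully that $b_i(y)$ really does grow by at least $2^i$ at every reset (the $\max$ in the update, together with the post-reset assignment $b_i(y) \gets k(y) - 2^{i+1}$, is what makes this work), and that the monotonicity $k(y) \leq 2n$ across the entire execution supplies the right ceiling on the number of resets. Once those two facts are in place, the rest is a straightforward geometric-series summation over scales~$i$.
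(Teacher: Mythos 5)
Your proposal is correct and follows essentially the same argument as the paper: level increases are bounded by $O(n^2)$ via $k(y)\leq 2n$, and counter increments are bounded by observing that each reset of $c_i(y)$ raises $b_i(y)$ by at least $2^i$, giving $O(n/2^i)$ resets per pair $(y,i)$ and $O(n^2\log n)$ overall. The only detail to add is that increments which never culminate in a reset must also be counted (at most $3\cdot 2^{i+1}-1$ residual increments per pair $(y,i)$, which is absorbed by the same bound); the paper handles this by noting $c_i(v)$ never exceeds $4n$.
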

\begin{proof}
  By \lemreftwo{densemaxlevel}{densecycle}, every label $k(v)$, and
  hence every bound $b_i(v)$, remains below $2n$.  Every arc traversal
  increases a vertex level or increases a count.  The number of level
  increases is $O(n^2)$.  Consider a count $c_i(v)$.  Each time
  $c_i(v)$ is reset to zero from $3\cdot 2^{i + 1}$, $b_i(v)$
  increases by at least $2^i$.  Since $b_i(v) \leq 2n$, the total
  amount by which $c_i(v)$ can decrease as a result of being reset is
  at most $12n$.  Since $c_i(v)$ starts at zero and cannot exceed
  $4n$, the total number of times $c_i(v)$ increases is at most $16n$.
  Summing over all counts for all vertices gives a bound of
  $O(n^2\log{n})$ on the number of count increases and hence on the
  number of arc traversals.
\end{proof}

\begin{theorem}
  If the insertion of an arc $(v,w)$ creates a cycle, the insertion
  algorithm stops and reports a cycle.  If not, the insertion
  algorithm maintains the invariant that $k$ is a weak topological
  numbering.  \thmlabel{densecorrect}
\end{theorem}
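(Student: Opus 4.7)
My plan is to prove both parts of the theorem together by maintaining a single order invariant tying the queue $A$, the stored priorities, and the vertex levels. Call an arc $(x,y)$ of the current graph \emph{clean} if it currently sits in $\id{out}(x)$ with stored priority $\pi(x,y) > k(x)$; I will maintain that every arc of the graph is either clean or lies in $A$. Because $k$ is nondecreasing and each stored priority is a past value of $k$ at the target, $\pi(x,y) \le k(y)$, so a clean arc satisfies $k(y) \ge \pi(x,y) > k(x)$. Consequently, once $A$ empties with the invariant intact, every arc satisfies $k(x) < k(y)$, so $k$ is a weak topological numbering, which in particular certifies acyclicity.

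Verifying the invariant is a routine case analysis on one traversal step. The base case is vacuous, and adding $(v,w)$ to $A$ preserves the invariant since pre-existing arcs remain clean by induction. When a step pulls $(x,y)$ from $A$ without detecting a cycle, both branches of the level update raise the new $k(y)$ strictly above $k(x)$; the step then moves into $A$ exactly those outgoing arcs of $y$ whose stored priority is at most the new $k(y)$, so each surviving arc of $\id{out}(y)$ still has priority exceeding the new $k(y)$ and remains clean, and it reinserts $(x,y)$ into $\id{out}(x)$ with priority equal to the new $k(y) > k(x)$, making $(x,y)$ itself clean. Arcs not incident to $x$ or $y$ are unaffected.

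With the invariant established and termination guaranteed by Lemma~\lemref{densetraversals}, the algorithm halts either by reporting a cycle upon pulling some $(x,v)$, or with $A$ empty. In the reporting case, tracing backward through how $(x,v)$ entered $A$, noting that each entry is either the seed $(v,w)$ or an outgoing arc of the target of a previously pulled arc, produces a pre-insertion walk from $w$ to $v$, which together with $(v,w)$ is a genuine cycle. In the empty-$A$ case, the invariant makes $k$ a weak topological numbering on the full current graph, certifying acyclicity. The theorem then follows by contraposition: if $(v,w)$ creates a cycle the empty-$A$ outcome is impossible, so the algorithm reports; if not, the reporting outcome is impossible, so the algorithm finishes with $k$ a weak topological numbering. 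The main obstacle will be stating the invariant at the right level of precision, since priorities may be stale strict lower bounds on the current target level, forcing the invariant to compare priorities to the source level rather than to the target level.
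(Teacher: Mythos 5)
Your proof is correct and reaches the same conclusion, but it organizes the argument around a different invariant and uses contraposition where the paper argues directly. For the ``$A$ empties'' branch, you maintain that every graph arc is either in $A$ or is \emph{clean} (stored priority exceeds the \emph{source} level), and you correctly note that cleanliness forces $k(x)<k(y)$ because priorities are past values of the target level. The paper instead states the weaker invariant that every arc with $k(x)\ge k(y)$ is on $A$ or is the arc being processed; your version is a tighter description of what the heap mechanism actually guarantees, and your closing remark about comparing priorities to the source (not target) level is precisely the subtlety that makes the looser phrasing error-prone. For the ``cycle implies report'' direction, the paper argues directly: it fixes the pre-insertion path from $w$ to $v$, observes that $v$ has the maximum level on it, and invokes a ``straightforward induction'' that the algorithm eventually traverses every arc on that path. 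You instead combine termination (\lemref{densetraversals}, which the paper establishes in the cyclic case too via \lemref{densecycle}) with your invariant: an empty-$A$ termination would certify acyclicity, so if a cycle was created the algorithm can only terminate by reporting. This contrapositive route is cleaner in that it sidesteps the path-following induction entirely; the only cost is that you must be sure termination holds in the cyclic case, which the paper's lemmas do guarantee. Your ``report implies cycle'' argument (tracing the arc that triggered each addition to $A$ back to the seed $(v,w)$) is a constructive variant of the paper's observation that every traversed arc has its tail reachable from $v$; both work.
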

\begin{proof}
  By \lemref{densetraversals} the algorithm terminates.  A
  straightforward induction shows that every arc $(x, y)$ traversed by
  the insertion algorithm is such that $x$ is reachable from $v$, so
  if the algorithm stops and reports a cycle, there is one.  Suppose
  the insertion of $(v, w)$ creates a cycle.  Before the insertion of
  $(v, w)$, $k$ is a weak topological numbering, so the path from $w$
  to $v$ existing before the addition of $(v, w)$ has vertices in
  strictly increasing order.  Thus $v$ has the largest level on the
  path.  A straightforward induction shows that the algorithm will
  eventually traverse every arc on the path and report a cycle, unless
  it reports another cycle first.

  Suppose addition of an arc $(v, w)$ does not create a cycle.  Before
  the addition, $k$ is a weak topological numbering.  The algorithm
  maintains the invariant that every arc $(x, y)$ such that $k(x) \geq
  k(y)$ is either on $A$ or is the arc being processed.  Thus, once
  $A$ is empty, $k$ is a weak topological numbering.
\end{proof}

\begin{theorem}
  The algorithm runs in $O(n^2\log{n})$ total time.
  \thmlabel{densetime}
\end{theorem}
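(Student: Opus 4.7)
The plan is to combine the already-established traversal count (\lemref{densetraversals}) with a per-operation analysis and an amortized bound for the heap overhead, so that the total time is dominated by the $\Theta(n^2 \log n)$ arc traversals.

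First I would argue that each individual traversal step does $O(1)$ work besides the heap operations: the test $y = v$, the comparison and update of $k(y)$, the computation of $i = \floor{\lg(k(y) - k(x))}$ (this is $O(1)$ in the standard word RAM model since all levels are $O(n)$ by Lemmas~\ref{lem:densemaxlevel} and~\ref{lem:densecycle}; alternatively, a pre-computed table of $\floor{\lg j}$ for $1 \leq j \leq 2n$ built in $O(n)$ time up front handles this), and the updates to $c_i(y)$ and $b_i(y)$. The set $A$ can be implemented as a list or stack with $O(1)$ insertion and arbitrary-element removal, so extracting an arc from $A$ and later inserting new arcs into $A$ is also $O(1)$ per arc.

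Next I would handle the heap operations, which is the one step requiring amortization. Adding an arc to $\id{out}(x)$ in the bucket-array representation is $O(1)$. The subtle cost is the ``delete from $\id{out}(y)$ every arc with priority at most $k(y)$'' operation: we must not charge a separate search for the smallest nonempty bucket each time. I would argue that a pointer $p_y$ recording the smallest bucket index that may still contain arcs suffices, and that $p_y$ never decreases: after a deletion pass it is advanced to $k(y)+1$, and every arc subsequently added to $\id{out}(y)$ has priority equal to some current $k(z) > k(y)$, hence at least $p_y$. Since $p_y \leq k(y)+1 = O(n)$ throughout the run by \lemref{densecycle}, the total cost of scanning empty buckets in $\id{out}(y)$ is $O(n)$, and the total over all $n$ heaps is $O(n^2)$. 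Each nonempty bucket visit either deletes an arc (chargeable $O(1)$ per deleted arc, i.e., per future traversal) or coincides with pointer advancement.

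Finally I would sum the three contributions: the $O(1)$ per-traversal work times $O(n^2 \log n)$ traversals by \lemref{densetraversals} gives $O(n^2 \log n)$; heap insertions and bucketed deletions cost $O(1)$ per traversal; and the pointer-scan overhead totals $O(n^2)$. The dominant term is $O(n^2 \log n)$, as claimed. The main obstacle, and the only nonroutine step, is the amortized heap argument showing that the per-heap linear scan cost never exceeds $O(n)$ even though arcs can be inserted at any priority — the monotonicity of $k$ and of the newly inserted priorities relative to the current pointer is what makes this go through.
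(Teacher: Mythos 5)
Your proof is correct and follows essentially the same route as the paper: the paper also bounds the work at $O(1)$ per arc traversal plus $O(n^2)$ total heap overhead (arguing, in the text preceding the theorem, exactly the bucket-array/monotone-pointer amortization you give), and then applies \lemref{densetraversals}. Your proposal just inlines the heap-implementation discussion that the paper places before the lemmas.
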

\begin{proof}
  The running time is $O(1)$ per arc traversal plus $O(n^2)$.  This is
  $O(n^2\log{n})$ by \lemref{densetraversals}.
\end{proof}

\ 

The space needed by the algorithm is $O(nlogn + m)$ for the labels,
bounds, and counts, and $O(n^2)$ for the n heaps.  Storing the heaps
in hash tables reduces their total space to $O(m)$ but makes the
algorithm randomized.  By using a two-level data
structure~\cite{TarjanYa79} to store each heap, the space for the
heaps can be reduced to $O(n^{1.5} + m)$ without using randomization.
This bound is $O(m)$ if $m/n = \Omega(n^{1/2})$; if not, the sparse
algorithm of Section 2 is faster.

The following theorem states that our analysis for this algorithm is
tight.  

\begin{theorem}
  For any sufficiently large $n$, there exists a sequence of
  $\Theta(n^2)$ arc insertions that causes our algorithm to do
  $\Omega(n^2\log{n})$ arc traversals.
  \label{thm:dense-lower}
\end{theorem}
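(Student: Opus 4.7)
The goal is to exhibit a sequence of $\Theta(n^2)$ arc insertions that saturates the amortized analysis of \lemref{densetraversals}. That analysis tracks $\Theta(n\log n)$ counter variables $c_i(v)$, each of which can be incremented at most $\Theta(n)$ times: each reset advances $b_i(v)$ by at least $2^i$, $b_i(v)\le 2n$ by \lemreftwo{densemaxlevel}{densecycle}, and each reset consumes $3\cdot 2^{i+1}=\Theta(2^i)$ increments. To match $\Omega(n^2\log n)$ from below, the construction must realize $\Omega(n)$ increments on $\Omega(n\log n)$ of these counters, i.e., it must fill every scale $i$ for every vertex in a large subset.

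The plan is a staged construction with $\lfloor\lg n\rfloor+1$ phases indexed by scale. Using the initial topological order, first condition the graph with arcs consistent with that order so that a set of $\Omega(n)$ ``target'' vertices $y_1,\ldots,y_t$ have carefully spaced levels and a supply of ``source'' vertices of every useful level is available. In phase $i$, insert arcs $(x,y_j)$ whose current level gap lies in $[2^i,2^{i+1})$, so that each traversal increments $c_i(y_j)$ at scale $i$ without immediately resetting it. Interleave these insertions with a small number of arcs that push $k(y_j)$ and then the corresponding source levels upward by roughly $2^i$, so that after a reset of $c_i(y_j)$ the graph is primed for another round of $\Theta(2^i)$ increments. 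Repeating $\Theta(n/2^i)$ times per $(i,j)$ pair yields $\Theta(n)$ increments per counter and $\Omega(n^2\log n)$ total traversals. The mechanism that makes this efficient within the $\Theta(n^2)$ insertion budget is the algorithm's use of stale priorities: an arc $(x,y)$ sitting in $\id{out}(x)$ is re-traversed whenever $k(x)$ crosses its stored (now-stale) priority, so a single inserted arc can contribute count increments at several different scales as the level structure evolves around it.

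The main obstacle will be interference between scales. Level bumps used to set up scale-$i$ increments can trigger unintended counter activity or resets at other scales, and a reset of $c_i(y)$ lifts $k(y)$ by $3\cdot 2^i$, potentially closing the window needed for subsequent scale-$i$ work on the same target. The detailed construction must control level gaps so that each phase's adjustments are large enough to land in the target $[2^i,2^{i+1})$ window yet small enough to preserve later opportunities at the same scale, and the phases should be scheduled (largest scale first) so that small-scale bookkeeping does not disrupt large-scale progress. Once this scheduling is in place, a direct count of inserted arcs gives $O(n^2)$---each target $y_j$ receives $O(n)$ distinct incoming arcs across all scales, amortized through the reuse of sources and of stale heap entries---which fits within the budget $m\le n(n-1)/2$ permitted by the theorem.
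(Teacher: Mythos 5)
Your high-level reading of what needs to happen is correct and matches the paper's: \lemref{densetraversals} charges traversals to counter increments, $\Theta(n\log n)$ counters each capped at $\Theta(n)$ increments, so the lower bound must realize $\Omega(n)$ increments on $\Omega(n\log n)$ of them; and the stale-priority reuse of heap entries is indeed the mechanism that lets $\Theta(n^2)$ inserted arcs generate $\omega(n^2)$ traversals. But the proposal never actually produces a construction---it names the obstacles (interference across scales, resets inflating $k(y)$) and then defers the resolution (``The detailed construction must control level gaps so that\ldots'', ``Once this scheduling is in place\ldots''). That is precisely the hard part, and it is where your proposed organization goes wrong.

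The scheduling you suggest, ``phases indexed by scale, largest scale first,'' does not fit inside the level budget. To get $\Theta(n/2^i)$ resets of $c_i(y)$ you must advance $b_i(y)$ by $\Theta(2^i)$ that many times, which forces $k(y)$ to rise by $\Theta(n)$ \emph{within the scale-$i$ phase alone}. But by \lemreftwo{densemaxlevel}{densecycle} every level is bounded by $O(n)$ for the entire run; you cannot spend $\Theta(n)$ of level headroom per scale and still have $\Theta(\log n)$ scales. The paper's construction avoids this by organizing the phases by \emph{target level} $i=2,\ldots,r$, not by scale: a chain of anchor vertices $u_1,\ldots,u_r$ with fixed levels pushes $k(t)$ up by exactly one per phase (a single $\Theta(n)$ sweep in total), and within phase $i$ it simultaneously fires every scale $j$ for which $i$ is a multiple $\ge 3$ of $2^j$, using one source set $S_j$ of size exactly $3\cdot2^{j+1}$ (the counter threshold) per scale. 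Thus the \emph{same} unit level increase of $k(t)$ is reused to drive $\Theta(\log n)$ different counters, which is the accounting your plan is missing.

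The other unresolved point in your sketch is the one you flag yourself: a reset of $c_i(y)$ would lift $k(y)$ by up to $3\cdot2^i$ via $k(y)\gets\max\{k(y),\,b_i(y)+3\cdot2^i\}$, wrecking the level schedule. The paper neutralizes this by maintaining the invariant $b_j(t)=k(t)-3\cdot2^j$ (equivalently, ensuring $k(t)\ge b_j(t)+3\cdot2^j$ already holds whenever $c_j(t)$ reaches $3\cdot2^{j+1}$), so every reset is a no-op on $k(t)$; all level increases of $t$ come from the anchor arcs $(u_{i-1},t)$, and the counters fill and reset without side effects. Without establishing an invariant of this kind, ``interleave a small number of arcs'' does not demonstrably keep the gaps in $[2^i,2^{i+1})$ across $\Theta(n/2^i)$ rounds. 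So: right target, right mechanism, but the proposed phase structure would fail, and the proposal stops short of a proof where the real work begins.
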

\begin{proof}
  Without loss of generality, suppose $n = (7/2)r - 3$, where $r \geq
  2^3$ is a power of~$2$.  The graph we construct consists of three
  categories of vertices: (1) vertices $u_1,u_2,\ldots,u_r$, (2) sets
  of vertices $S_0,S_1,\ldots,S_{\lg(r) - 2}$ with $\card{S_j} = 3
  \cdot 2^{j+1}$ (so $\sum_j \card{S_j} = 3 (r/2 - 1)$), and (3) a set
  of vertices $T$ with $\card{T} = r$.  Initially there are no arcs
  in the graph, and all levels are~$1$.

  First, add arcs $(u_i,u_{i+1})$ in order for $1 \leq i < r$.  After
  these arc additions, $k(u_i) = i$.  These levels are invariant over
  the remainder of the arc insertions --- we use these vertices as
  anchors to increase the levels of all the other vertices.  In fact,
  the \emph{only} time the level of any other vertex $v \in (\bigcup_j
  S_j)\cup T$ will increase is when adding an arc $(u_i,v)$.
  
  The arc insertions proceed in phases ranging from $2$ to $r$.  In
  phase $i$, first insert arc $(u_{i-1},t)$ for all $t \in T$, thereby
  increasing $k(t)$ to $k(t) = i$.  Next, consider each $j$ for which
  there exists a constant $c \geq 3$ such that $i = c2^j$, i.e., $i$
  is a sufficiently large multiple of $2^j$.  There are two cases
  here, described in more detail shortly.  If $c=3$, insert arcs from
  $S_j$ to $T$, not causing a level increase to $t$.  If $c > 3$, the
  algorithm traverses the arcs from $S_j$ to $T$ again, but without
  causing any level increases to $t\in T$.  Moreover, the only time
  any $c_j(t)$ or $b_j(t)$ changes, for $j > 0$, is when the algorithm
  traverses an arc from $S_j$ to $t \in T$. 

  \noindent\textbf{Case 1 (add arcs from $S_j$ to $T$):} If $i = 3 \cdot
  2^j$ for some $j$, add arcs $(u_{2^{j+1}-1},s_j)$ for all $s_j \in
  S_j$, causing $k(s_j)$ to increase to $2^{j+1}$.  Also add arcs
  $(s_j,t)$ for all $s_j \in S_j$ and $t \in T$.  Observe that before
  these arc additions $\floor{\lg(k(t)-k(s_j))} =
  \floor{\lg(2^{j+1}-2^j)} = j$.  Moreover, $c_j(t) = 0$ and
  $b_j(t)=0$.  For each $t$, when the last arc insertion occurs,
  $c_j(t)$ increases to $3 \cdot 2^{j+1}$.  We have, however, that
  $k(t) = 3\cdot 2^{j+1} > b_j + 3\cdot 2^j$, and hence $k(t)$ does
  not increase.  The counter $c_j(t)$ is subsequently reset to $0$ and
  $b_j(t) = k(t) - 2^{j+1} = k(s_j) - 2^j$.  Finally, the priority of
  each of these arcs $(s_j,t)$ is updated to $3\cdot 2^j$ in
  $\id{out}(s_j)$.
  
  \noindent\textbf{Case 2 (follow arcs from $S_j$ to $T$):} Otherwise, $i =
  c2^j$, for $c > 3$.  Since $i > 3 \cdot 2^j$, the arcs $(s_j,t)$
  already exist.  Before this step, we have $k(s_j) = k(t) - 2^{j+1}$,
  for each $s_j \in S_j$.  Moreover, we have $b_j(t) = k(s_j) - 2^j =
  k(t) - 3\cdot 2^j$.  Insert arcs $(u_{i-2^j-1},s_j)$, for all $s_j
  \in S_j$.  Such an arc insertion causes $k(s_j)$ to increase to the
  next multiple of $2^j$.  After the update, we have $k(s_j)$ equal to
  the priority of each arc $(s_j,t)$ in $\id{out}(s_j)$, and hence the
  algorithm traverses each of the outgoing arcs.  Moreover, $\lg(k(t)
  - k(s_j)) = \lg(i - 2^j) = j$, and hence the counter $c_j$ is
  affected.  For each $t$, the counter $c_j(t)$ again reaches $3\cdot
  2^{j+1}$.  Since $b_j(t) = k(t) - 3\cdot 2^j$, the level of $t$
  again does not increase.  The counter $c_j(t)$ is subsequently reset
  to $0$, each $b_j(t) = k(t) - 2^{j+1} = k(s_j) - 2^j$, and the
  priority of each of the arcs $(s_j,t)$ is set to $k(t)$ in
  $\id{out}(s_j)$.
  
  In both cases, whenever the phase number $i$ is a large enough
  multiple of $2^j$, the algorithm traverses all arcs $(s_j,t)$ such
  that $s_j \in S_j$ and $t \in T$.  Consider a fixed $j$.  There are
  $\card{S_j} \cdot \card{T} = 3\cdot 2^{j+1}r$ such arcs.  Summing
  over all $r/2^j-2$ phases during which the phase number is a large
  enough multiple of $2^j$, there are $(3\cdot 2^{j+1}r)(r/2^j-2) =
  \Omega(r^2) = \Omega(n^2)$ arc traversals from vertices in $S_j$ to
  vertices in $T$.  Summing over all $\lg(r)-2 = \Theta(\log n)$
  values of $j$ yields a total of $\Omega(n^2 \log n)$ arc traversals.
\end{proof}

\ 

The proof extends to give a slightly more
general result: for any $1\leq k \leq \lg n$, there is a sequence of
$\Theta(2^k n)$ arc insertions causing the algorithm to do $\Theta(n^2
k)$ arc traversals.  To prove this, omit from the proof of
\thmref{dense-lower} the sets $S_j$ with $j>k$.  The generalization
implies that $\Theta(n)$ arcs are enough to make the algorithm take
$\Omega(n^2)$ time, and $\Theta(n^{1+\epsilon})$ arcs, for any
constant $\epsilon > 0$, are enough to make the algorithm take
$\Omega(n^2 \log n)$ time. 

\secput{extensions}{Simple Extensions}

In this section we extend our sparse and dense algorithms to provide
some additional capabilities possessed by previous algorithms.  All
the extensions are simple and preserve the asymptotic time bounds of
the unextended algorithms.  Our first extension eliminates ties in the
vertex numbering maintained by the dense algorithm presented in
\secref{dense}.  We break ties by giving each vertex a distinct index
as in the sparse algorithm and ordering the vertices lexicographically
by level and index.  The indices can be arbitrary, as long as they are
distinct within each level: we can use fixed indices, or we can assign
new indices when vertices change level.

Assigning new indices is useful in our second extension, which
explicitly maintains a doubly-linked list of the vertices in
lexicographic order by level and index, and hence in a topological
order.  We maintain a pointer to the first vertex on the list.  We
also maintain for each non-empty level a pointer to the last vertex on
the level.  We store these pointers in an array indexed by level.
When a vertex increases in level, we delete it from its current list
position and re-insert it after the last vertex on its old level,
unless it was the last vertex on its old level, in which case its
position in the list does not change.  This takes O(1) time, including
all needed pointer updates.  In the sparse algorithm, when moving a
group of vertices whose levels change as a result of an arc insertion,
we move them in decreasing order by new index.  In the dense
algorithm, we can move such a group of vertices in arbitrary order,
but we then assign each vertex moved a new index that is less than those of
vertices previously on the level.  As in the sparse algorithm, we can
do this by maintaining the smallest index and counting down.

Our third extension explicitly returns a cycle when one is discovered,
rather than just reporting that one exists.  We augment each search to
grow a spanning tree represented by parent pointers as each search
proceeds.  In the sparse algorithm, the backward search generates an
in-tree rooted at $v$ containing all visited vertices; the forward
search generates an out-tree rooted at $w$ containing all vertices whose
level increases.  If the backward search causes $k(w)$ to increase to
$k(v) + 1$ and $B$ to become empty, the forward search may visit vertices
previously visited by the backward search.  Each such vertex acquires
a new parent when the forward search visits it for the first time.
When the algorithm stops and reports a cycle, a cycle can be obtained
explicitly by following parent pointers.  Specifically, if the
backward search traverses an arc $(w, y)$, following parent pointers
from $y$ gives a path from $y$ to $v$, which forms a cycle with $(v, w)$ and
$(w, y)$.  If the forward search traverses an arc $(x, y)$ with $y = v$ or $y$
in $B$, traversing parent pointers from $x$ and from $y$ gives a path from $w$
to $x$ and a path from $y$ to $v$, which form a cycle with $(x, y)$ and $(v,
w)$.  In the dense algorithm, there is only one tree, an out-tree
rooted at $v$, containing $v$ and all vertices whose level increases.
Vertex $v$ has one child, $w$.  If the search traverses an arc $(x, v)$,
following parent pointers from $x$ gives a path from $v$ through $(v, w)$ to
$x$, which forms a cycle with $(x, v)$.

Our fourth extension is to handle vertex insertions and to allow $n$
and $m$ to be unknown in advance.  In the sparse algorithm, we insert
a vertex $v$ by giving $v$ a level of 1, decrementing $\id{index}$,
and giving $v$ an index equal to $\id{index}$.  We also maintain a
running count of $n$ and $m$.  Each time $n$ or $m$ doubles, we
recompute $\Delta$, but we replace $\Delta$ only if it doubles.  It is
straightforward to verify that \thmref{sparsetime} remains true.  In
the dense algorithm, we insert a vertex by giving it a level of 1.  We
also maintain a running count of $n$.  Each time $\floor{lg{n}}$
increases, we add a corresponding new set of bounds and counts.
\thmref{densetime} remains true.  We can combine this extension with
any of the other extensions in this section, and with the extension
described in the next section.

\secput{components}{Maintenance of Strong Components}
\newcommand{\find}{\proc{Find}}
\newcommand{\link}{\proc{Link}}

A less straightforward extension of our algorithms is to the
maintenance of strong components.  This has been done for some of the
earlier algorithms by previous authors.  Pearce~\cite{Pearce05} and
Pearce and Kelly~\cite{PearceKe03} sketched how to extend their
incremental topological ordering algorithm and that of
Marchetti-Spaccamela et al.~\cite{Marchetti-SpaccamelaNaRo96} to
maintain strong components; HKMST showed in detail how to extend their
algorithms.  Our strong component extensions follow the approach of
HKMST.

Our extended algorithms store the vertex sets of the strong components
in a disjoint set data structure~\cite{Tarjan75}.  Such a data structure
represents a partition of a set into disjoint subsets.  Each subset
has a \defn{canonical element} that names the set.  Initially all
subsets are singletons; the unique element in each subset is its
canonical element.  The data structure supports two operations:

$\find(x)$: Return the canonical element of the set containing
element $x$.

$\link(x, y)$: Form the union of the sets whose canonical
elements are $x$ and $y$, with the union having canonical element $x$.
This operation destroys the old sets containing $x$ and $y$.

We store each set as a tree whose nodes are the elements of the set,
with each element having a pointer to its parent.  If we do
\proc{Find}s using path compression, and we do \proc{Link}ing by rank
or size, then the total time for any number of \proc{Find} and
\proc{Link} operations on a partition of $n$ elements is $O(n\log n)$
plus $O(1)$ per operation~\cite{Tarjan75}.  In our application the number
of set operations is $O(1)$ per arc examined, so the time for the set
operations does not increase the asymptotic time bound.

\subsection{An Extension for Sparse Graphs}

Our extension of the sparse algorithm maintains for each component a
level, an index, a set of arcs $(x, y)$ such that $x$ is in the
component, and a set of arcs $(x, y)$ such that $y$ is in the
component and the components containing $x$ and$ y$ are on the same
level.  We store the level, index, and incident arc sets with the
canonical vertex of the component.  Initially every vertex is in its
own component, all components are on level 1, the components have
distinct indices between $-1$ and $-n$, inclusive, all the incident
arc lists are empty, and \id{index}, the smallest index, is $-n$.  Each
search generates a spanning tree, represented by parent pointers: if
$x$ is a canonical vertex, $p(x)$ is its parent; if $x$ is a root,
$p(x) = x$.  Initially all parents are \const{null}.  The algorithm
marks canonical vertices it finds to be in a new component.  Initially
all vertices are unmarked.

The algorithm adds a new arc to the appropriate incident arc lists
only if its ends are in different components after its insertion.  The
backward searches delete arcs whose ends are in the same component, as
well as the second and subsequent arcs between the same pair of
components.  To facilitate the latter deletions, it uses a bit matrix
$M$ indexed by vertex.  Initially all entries of $M$ are zero.

The algorithm for inserting a new arc $(v, w)$ consists of the following
steps:

\noindent \textbf{Step 1 (test order):} Let $u = \find(v)$ and $z =
\find(w)$.  If $k(u), i(u) \leq k(z), i(z)$, go to Step 6.

\noindent \textbf{Step 2 (search backward):} Let $B = [ ]$,$F = [ ]$,
$\id{arcs} = 0$, $p(u) = u$, and $p(z) = z$.  Do $\proc{EBvisit}(u)$,
where mutually recursive procedures \proc{EBvisit} and
\proc{EBtraverse} are defined as follows:

$\proc{EBvisit}(t)$: For $(x, y) \in \id{in}(t)$ do $\proc{EBtraverse}(x, y)$.  Let $B = B \& [t]$.

$\proc{EBtraverse}(x, y)$: If $\find(x) = \find(y)$, or $M(\find(x),
\find(y)) = 1$, delete $(x, y)$ from $\id{in}(\find(y))$ and from
$\id{out}(\find(x))$.  Otherwise, do the following.  Let $M(\find(x),
\find(y)) = 1$ and $\id{arcs} = \id{arcs} + 1$.  If $\id{arcs} \geq
\Delta$, let $k(z) = k(u) + 1$, let $\id{in}(z) = \set{}$, let $B = [
]$, unmark any canonical vertices marked as being in a new component,
make all parents null except that of $z$, reset $M$ to zero, and go to
Step 3.  If $\find(x) = z$, mark $z$ as being in a new component.  If
$\find(x)$ is marked, follow parent pointers from $\find(y)$, marking
every canonical vertex reached (including $\find(y)$) as being in a
new component, until marking $u$ or reaching a marked vertex.  If
$p(\find(x)) = \const{null}$, let $p(\find(x)) = \find(y)$ and do
$\proc{EBvisit}(\find(x))$.

If the search ends before traversing at least $\Delta$ arcs, test whether
$k(z) = k(u)$.  If so, go to Step 4; if not, let $k(z) = k(u)$ and $\id{in}(z) =
\set{}$.

\noindent \textbf{Step 3 (search forward)}: Do $\proc{EFvisit}(z)$,
where mutually recursive procedures \proc{EFvisit} and
\proc{EFtraverse} are defined as follows:

$\proc{EFvisit}(t)$: For $(x, y)$ in $\id{out(t)}$ do $\proc{EFtraverse}(x, y)$.  Let $F = [t] \& F$.

$\proc{EFtraverse}(x, y)$: If $\find(x) = \find(y)$, delete $(x, y)$
from $\id{out}(\find(x))$.  If $\find(y) = u$ or $\find(y)$ is on $B$,
follow parent pointers from $\find(y)$, marking each canonical vertex
reached (including $\find(y)$), until marking $u$ or reaching a marked
vertex.  If $\find(y)$ is marked, follow parent pointers from
$\find(x)$, marking each canonical vertex reached (including
$\find(x)$), until marking $z$ or reaching a marked vertex.  If
$k(\find(y)) < k(z)$, let $p(\find(y)) = \find(x)$, let $k(\find(y)) =
k(z)$, let $\id{in}(\find(y)) = \set{}$, and do
$\proc{EFvisit}(\find(y))$.  If $k(\find(y)) = k(z)$, add $(x, y)$ to
$\id{in}(\find(y))$.

\noindent \textbf{Step 4 (form component):} Let $L = B \& F$.  If $z$
is marked, combine the old components containing the marked vertices
into a single new component with canonical vertex $z$ by uniting the
incoming and outgoing arc sets of the marked vertices, and uniting the
vertex sets of the components using \proc{Unite}.  Delete from $L$ all
marked vertices.  Unmark all marked vertices.

\noindent \textbf{Step 5 (re-index):} While $L$ is non-empty, let
$\id{index} = \id{index} - 1$, delete the last vertex $x$ on $L$, and
let $i(x) = \id{index}$.

\noindent \textbf{Step 6 (add arc):} If $\find(v) \neq \find(w)$, add $(v, w)$ to $\id{out}(u)$ and, if $k(u) = k(z$), to $\id{in}(z)$.

In the proofs to follow we denote levels and indices just before and
just after the insertion of an arc $(v, w)$ by unprimed and primed
values, respectively.

\begin{theorem}
  The extended sparse algorithm is correct.  That is, it correctly
  maintains the strong components, all the data structures, and the
  following invariant on the levels and indices: if $(x, y)$ is an arc,
  either $\find(x) = \find(y)$ or $k(\find(x)), i(\find(x) < k(\find(y)),
  i(\find(y))$.
\end{theorem}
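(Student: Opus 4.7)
The plan is to proceed by induction on the number of arc insertions, mirroring the proof of \thmref{sparsecorrect} but accounting for the dynamic components, the auxiliary matrix $M$, and the parent-pointer marking scheme. The base case is immediate: initially every vertex is a singleton component, no arcs exist, and all invariants hold vacuously. For the inductive step, suppose the invariants hold just before inserting $(v,w)$ and let $u = \find(v)$, $z = \find(w)$. If $k(u), i(u) < k(z), i(z)$ then Step 6 preserves everything, so assume the contrary; Steps 2--5 then execute and I would analyze them in order.

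The central new claim is that the marking identifies exactly the set of components that should be merged into a single new strong component. I would prove, by induction on the progress of the two searches, that a canonical vertex $t$ becomes marked if and only if its current component lies on a cycle through $(v,w)$ in the post-insertion graph. The parent pointers established during \proc{EBvisit} form an in-tree rooted at $u$ over the component DAG, and those established during \proc{EFvisit} form an out-tree rooted at $z$. The two triggers of marking---backward search reaching $z$, or forward search reaching $u$ or a vertex on $B$---each witness such a cycle, and tracing back along parent pointers extends the marking to every other component on the same cycle. Completeness, namely that every component on a cycle through $(v,w)$ is eventually visited and marked, would follow by lifting the reachability argument of \thmref{sparsecorrect} from vertices to canonical vertices and using that $k$ is a weak topological numbering of components just before the insertion.

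Given the marking claim, Step 4 correctly forms the new strong component by repeatedly \proc{Link}ing marked components into $z$ and taking unions of incident arc sets, and the deletion of marked vertices from $L$ before re-indexing is correct because each is absorbed into $z$, whose position on $L$ fixes their topological rank. I would then verify that $M$ is reset on every abort and prunes exactly the duplicate component-pair arcs, so the backward-search budget $\id{arcs}$ counts only distinct inter-component crossings. The level-and-index invariant is verified case-by-case just as in \thmref{sparsecorrect}, substituting $\find(x)$ and $\find(y)$ for $x$ and $y$: unaffected components keep their labels; $z$ acquires level $k(u)$ or $k(u)+1$; components in $F$ are lifted to $k(z)$; and Step 5 assigns strictly decreasing new indices to the survivors of $B \& F$ in topological order.

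The main obstacle will be the marking-correctness claim. The interleaving of the two searches with on-the-fly recognition of new components makes it subtle to state precisely which canonical vertices are known to lie on a cycle at each recursive call of \proc{EBtraverse} and \proc{EFtraverse}; pinning down a loop invariant for the four mutually recursive procedures that correctly describes the interaction between the backward in-tree, the forward out-tree, and the propagation of marks across both, is the key technical ingredient. Once that invariant is in place, the remainder of the proof follows the structure of the basic sparse case without further complication.
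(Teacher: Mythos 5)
Your proposal takes essentially the same inductive skeleton as the paper's proof — induction on arc insertions, reduce to Steps 2--5, then argue that the marked set is exactly the set of components that merge — and your characterization of the marked set (components lying on a cycle through $(v,w)$, equivalently on a simple path from $z$ to $u$ in the pre-insertion component DAG) matches the paper's. But you explicitly stop short of proving the one claim that carries all the weight: you write that ``pinning down a loop invariant for the four mutually recursive procedures \dots is the key technical ingredient'' and leave it open. That is a genuine gap, not a detail to be filled in later; without it the theorem is unproved.

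The paper closes this gap by splitting into three cases according to how $k(z)$ changes: $k'(z) = k(z)$ (backward search alone marks, and marking stays within level $k(u)$), $k'(z) = k(u)+1 > k(z)$ (backward search aborts, $B = [\ ]$, and only the forward search marks), and $k'(z) = k(u) > k(z)$ (both searches contribute). This case split is exactly what makes a clean ``induction on the progress of the searches'' possible, because in each case only one search marks, or else — in the third case — the paper observes that $B$ and $F$ are disjoint and that any simple $z$-to-$u$ path traverses $F$-components (old level $< k(u)$) before $B$-components (old level $= k(u)$), so the forward and backward markings can be analyzed in sequence rather than as a tangled interaction. Your proposal treats the two searches' parent trees as a single intertwined structure, which is why you found yourself unable to state the invariant; the level-based decomposition is the missing idea. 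Absent that decomposition, I would not credit the argument as establishing the theorem.
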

\begin{proof}
  The proof is by induction on the number of arc insertions.
  Initially all the data structures are correct.  It is
  straightforward to verify that the algorithm correctly maintains
  them, assuming that it correctly maintains the strong components and
  the desired invariant on levels and indices.  Suppose the strong
  components are correct and the invariant holds before the insertion
  of an arc $(v, w)$.  If this insertion does not create a new
  component, then the algorithm does the same thing as the unextended
  algorithm, except that it operates on components instead of
  vertices.  Thus after the insertion the components are correct and
  the invariant holds.

  Suppose on the other hand that the insertion of $(v, w)$ creates a new
  component.  Until a vertex is marked, the algorithm does the same
  thing as the unextended algorithm, except that it operates on
  components instead of vertices.  Thus it will mark at least one
  vertex.  We consider three cases: $k'(z) = k(z)$; $k'(z) = k(u) > k(z)$;
  $k'(z) = k(u) + 1 > k(z)$.

  If $k'(z) = k(z)$, then $k(z) = k(u)$, the insertion changes no levels,
  and the backward search finishes without traversing at least $\Delta$
  arcs.  An old canonical vertex $x$ is in the new component if and only
  it is on a simple path from $z$ to $u$.  All components along such a
  path must have level $k(u)$.  An induction shows that the backward
  search marks a canonical vertex if and only if it is on such a path.
  After Step 2, $B$ contains the canonical vertices on level $k(u)$ from
  which there is a path to $u$ avoiding the old component containing $z$.
  The algorithm skips the forward search.  It correctly forms the new
  component in Step 4 and deletes from $L = B$ all old canonical
  vertices in the new component except z.  This includes $u$.  The
  canonical vertices remaining in $L$ at the end of Step 4 are not
  reachable from $z$. It follows that Step 5 restores the invariant that
  if $(x, y)$ is an arc, $\find(x) = \find(y)$ or $k(\find(x)), i(\find(x)) <
  k(\find(y)), i(\find(y))$.

  A similar argument applies if $k'(z) = k(u) + 1 > k(z)$.  In this
  case $B = [ ]$ in Step 3.  At the end of Step 3, $F$ contains all
  old canonical vertices reachable from $z$ by a path through
  components whose old levels are at most $k(u)$.  This includes $u$.
  An old canonical vertex $x$ is in the new component if and only it
  is on a simple path from $z$ to $u$.  All components along such a
  path must have old level at most $k(u)$.  An induction shows that
  the forward search marks a canonical vertex if and only if it is on
  such a path.  It follows that the algorithm correctly forms the new
  component in Step 4 and restores the invariant on levels and indices
  in Step 5.

  The remaining case, $k'(z) = k(u) > k(z)$, is the most interesting:
  the backward search runs out of arcs to traverse, and there is a
  forward search.  After Step 2, $B$ contains all old canonical vertices
  from which $u$ is reachable by a path through components of old level
  $k(u)$.  After Step 3, $F$ contains all canonical vertices reachable
  from $z$ by a path through components with old levels less than $k(u)$.
  Thus $B$ and $F$ are disjoint.  An old canonical vertex $x$ is in the new
  component if and only if it is on a simple path from $z$ to $u$.  Such a
  path passes through components in non-increasing order by old level.
  These components have canonical vertices in $F$ or $B$, with those in $F$
  first.  An induction shows that Steps 2 and 3 mark a canonical
  vertex if and only if it is on such a path.  It follows that the
  algorithm correctly forms the new component in Step 4 and restores
  the invariant on levels and indices in Step 5. 
\end{proof}

\lemref{minindex} remains true for the extended algorithm.  To bound
the running time, we need to prove \lemref{arcmaxlevel} for the
extension.  This requires some definitions.  We call an arc $(x, y)$
\defn{live} if $x$ and $y$ are in different strong components and
\defn{dead} otherwise.  A newly inserted arc that forms a new
component is dead immediately.  The \defn{level} of a live arc $(x,
y)$ is $k(\find(x))$.  The level of a dead arc is its highest level
when it was live; an arc that was never live has no level.  We
identify each connected component with its vertex set; an arc
insertion either does not change the components or combines two or
more components into one.  A component is \defn{live} if it is a
component of the current graph and \defn{dead} otherwise.  The
\defn{level} of a live component is the level of its canonical vertex;
the level of a dead component is its highest level when it was live.
A vertex and a component are \defn{related} if there is a path
that contains the vertex and a vertex in the component.  The number of
components, live and dead, is at most $2n - 1$.

\begin{lemma}In the extended sparse algorithm, no vertex level exceeds
  $\min\set{m^{1/2}, 2n^{2/3}} + 1$.
  \lemlabel{sparsecomponentlabel}
\end{lemma}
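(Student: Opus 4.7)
The plan is to parallel the proof of \lemref{arcmaxlevel} with strong components in place of vertices, exploiting the bound of at most $2n-1$ on the number of distinct strong components that ever exist during the execution. Fix a topological order of components just before the last arc insertion, and for each level $k>1$ held by some component at this fixed time, let $C_k$ be the component at level $k$ that is lowest in the fixed order. The promotion of $C_k$'s canonical vertex to level $k$ came from some earlier arc insertion whose backward search in the extended algorithm traversed at least $\Delta$ live arcs; call this arc set $A_k$. Both canonical endpoints of every arc in $A_k$ sat at level $k-1$ at the moment of that search.

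For the first bound, reproduce the topological-order argument: a predecessor of $C_k$ at the search time remains a predecessor after subsequent component merges (merging only identifies vertices and destroys no paths), so its fixed-time component sits at level at most $k-1$; combined with the fact that a component's level never decreases (it may jump up when absorbed into another), the fixed-time component containing each endpoint of $A_k$ is at level exactly $k-1$, unless that endpoint has been absorbed into $C_k$ itself. Either way, the $A_k$ for distinct $k$ use disjoint arcs, so at most $m/\Delta$ values of $k$ occur. For the second bound, count endpoints: since $\id{in}(\cdot)$ never holds loops or parallel arcs between distinct components, $A_k$ has at least $p_k\geq\sqrt{\Delta}$ distinct canonical endpoints at the search time. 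Let $q_k$ be the number of distinct fixed-time components at level $k-1$ that contain at least one such endpoint; these are pairwise disjoint across $k$ and total at most $n$. The shortfall $p_k-q_k$ equals the number of component merges that consolidated $A_k$'s search-time endpoints into a single fixed-time component (absorption into $C_k$ counted), and since at most $n-1$ such merges take place over the whole execution, $\sum_k (p_k-q_k)\leq n-1$. Therefore $\sum_k p_k\leq 2n-1$, so at most $(2n-1)/\sqrt{\Delta}\leq 2n/\sqrt{\Delta}$ values of $k$ occur. Combining both bounds with $\Delta=\min\{m^{1/2},n^{2/3}\}$ gives the claimed $\min\{m^{1/2},2n^{2/3}\}+1$.

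The main obstacle is the endpoint-counting step. In the unextended proof each vertex has exactly one fixed-time level, so counting vertices per level yields the $n/\sqrt{\Delta}$ bound directly; here a search-time endpoint can be merged into any fixed-time component (including $C_k$), so a naive per-level count at the fixed time can shrink. The merge budget of $n-1$ rescues the argument and is exactly responsible for the extra factor of $2$ in $2n^{2/3}$ compared with the unextended bound.
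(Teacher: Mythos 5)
Your proposal takes a genuinely different route from the paper, and it has a real gap at its very first step. The paper does not fix a reference topological order and pick a minimal witness per level. Instead it assigns a level to every arc and every component, \emph{including dead ones} (a dead arc or component keeps the highest level it had while live), and proves by induction on arc insertions that \emph{every} canonical vertex of level $k$ is related to at least $\Delta$ arcs of level $j$ and at least $\sqrt{\Delta}$ components of level $j$ for every $j < k$. The count is then immediate: at most $m$ arcs and at most $2n-1$ components ever exist, each carrying a single level, so at most $m/\Delta$ and $2n/\sqrt{\Delta}$ levels occur below the maximum. The factor $2$ thus comes from counting dead components --- the same $n-1$ merge budget you invoke, but deployed so that no arc or endpoint ever has to be tracked to a fixed reference time.

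The gap: you assert that $C_k$ was promoted to level $k$ by a backward search that traversed at least $\Delta$ arcs. In the extended algorithm a component reaches level $k$ in other ways: via $k(z)\gets k(u)$ when the backward search finishes early with $k(u)=k$, or via the forward search from a head already at level $k$. In the vertex setting these are excluded because the witnessing level-$k$ vertex precedes $w$ topologically and survives to the fixed time, contradicting minimality; in the component setting the witness can be \emph{absorbed into $C_k$ itself} by that very insertion (exactly what happens when the new arc closes a cycle through $u$), leaving no fixed-time level-$k$ component below $C_k$ to contradict anything. So $A_k$ need not exist, and repairing this forces you to chase promotion histories through dead components --- which is what the paper's relatedness invariant does systematically. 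Two secondary problems: your own case analysis allows an endpoint's fixed-time component to be $C_k$ (level $k$), so an arc can belong to both $A_k$ and $A_{k+1}$ and the sets $A_k$ need not be disjoint; and a single elementary merge can be charged by two consecutive levels in the accounting $\sum_k(p_k-q_k)\leq n-1$. Each of these leaks a constant factor that the stated bound $\min\set{m^{1/2},2n^{2/3}}+1$ has no room for.
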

\begin{proof}
  We claim that for any level $k > 1$ and any level $j < k$, any canonical
  vertex of level $k$ is related to at least $\Delta$ arcs of level~$j$ and
  at least $\sqrt{\Delta}$ components of level~$j$.  We prove the claim by
  induction on the number of arc insertions.  The claim holds
  vacuously before the first insertion.  Suppose it holds before the
  insertion of an arc $(v, w)$.  Let $u = \find(v)$ and $z = \find(w)$ before
  the insertion.  A vertex is reachable from $z$ after the insertion if
  and only if it is reachable from $z$ before the insertion.  The
  insertion increases the level only of $z$ and possibly of some
  vertices and components reachable from $z$.  It follows that the claim
  holds after the insertion for any canonical vertex not reachable
  from $z$.

  Consider a vertex $y$ that is reachable from $z$ and is canonical after
  the insertion.  Since level order is topological, $k'(y) \geq k'(z)$.
  For $j$ such that $k'(z) \leq j < k'(y)$, $y$ is related to at least $\Delta$
  arcs of level~$j$ and $\sqrt{\Delta}$ components of level~$j$ before the
  insertion.  None of these arcs or components changes level as a
  result of the insertion, so the claim holds after the insertion for
  $y$ and level~$j$.  Since any arc or component of level less than $k'(z)$
  that is related to $z$ is also related to $y$, the claim holds for $y$
  after the insertion if it holds for~$z$.

  After the insertion, $z$ is reachable from $u$.  Also, $k(u) \leq
  k'(z) \leq k(u) + 1$.  The claim holds for $u$ before the insertion.
  Let $(x, y)$ be an arc of level less than $k(u)$ that is related to
  $u$ before the insertion.  If $x$ is reachable from $z$, $(x, y)$
  will be dead after the insertion and hence its level will not
  change.  Neither does its level change if $x$ is not reachable from
  $z$.  Arc $(x, y)$ is related to $z$ after the insertion.  Consider
  a component of level less than $k(u)$ that is related to $u$ before
  the insertion.  If the component is reachable from $z$, it is dead
  after the insertion of $(v,w)$ and hence does not change level; if
  it is not reachable from $z$, it also does not change level.  After
  the insertion, the component is related to~$z$.  It follows that the
  claim holds for $z$ and any level $j < k(u)$.

  One case remains: $j = k(u) < k'(z) = k(u) + 1$.  For the level of
  $z$ to increase to $k(u) + 1$, the backward search must traverse at
  least $\Delta$ arcs of level $k(u)$ before the insertion, each of
  which is related to $z$ and on level $k(u)$ after the insertion.
  The ends of these arcs are in at least $\sqrt{\Delta}$ components of
  level $k(u)$, each of which is related to $z$ and on level $k(u)$
  after the insertion.  Thus the claim holds for $z$ and level $k(u)$
  after the insertion.  This completes the proof of the claim.

  The claim implies that for every level other than the maximum, there
  are at least $\Delta$ different arcs and $\sqrt{\Delta}$ different
  components.  Since there are only $m$ arcs and at most $2n - 1$
  components, the maximum level is at most $\min\set{m/\Delta,
    2n/\sqrt{\Delta}} + 1$.  The lemma follows.
\end{proof}

\begin{theorem} The extended sparse algorithm takes
  $O(\min\set{m^{1/2}, n^{2/3}} m)$ time for $m$ arc insertions.
\end{theorem}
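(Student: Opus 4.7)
The plan is to follow the structure of the proof of \thmref{sparsetime}, replacing \lemref{arcmaxlevel} by \lemref{sparsecomponentlabel}. First I would invoke \lemref{minindex} (which still applies unchanged) together with the newly established \lemref{sparsecomponentlabel} to conclude that every level and index ever assigned is polynomial in $n$, so each level or index comparison or assignment takes $O(1)$ time. For the disjoint-set operations I would appeal to the standard $O(n \log n)$-plus-$O(1)$-per-operation bound cited earlier in the paper, noting that the algorithm performs only $O(1)$ \find{} or \link{} calls per arc examined during a search.

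Next I would bound the work of the individual searches. Each backward search aborts after at most $\Delta$ arc traversals (or else finishes naturally having traversed at most $\Delta$ arcs), for a direct cost of $O(\Delta)$ per insertion; the bits of $M$ touched during a single backward search are $O(\Delta)$ in number, and resetting just those entries after an aborted search also costs $O(\Delta)$, provided we remember the entries we set. For the forward searches and for the maintenance of the incident-arc sets, I would charge each arc $O(\Delta)$ times total, using \lemref{sparsecomponentlabel}: the level of any canonical vertex increases only $O(\Delta)$ times over the entire algorithm, and an arc is (re)inserted into or removed from an incident arc set only when the arc itself is inserted or when one of its endpoints changes level. This yields $O(\Delta m)$ total for the forward searches and for incident-list maintenance.

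The main obstacle is to handle Step~4 (the formation of a new component) without breaking the bound. The key observation I would use is that the number of canonical vertices marked as being in the new component during a single insertion is bounded by the number of canonical vertices visited during Steps~2 and~3 of the same insertion, which is $O(\Delta)$. Merging the incoming and outgoing arc sets of the marked vertices by list concatenation costs $O(1)$ per merge; any duplicate arcs introduced by concatenation are harmless, because the next backward search that encounters such an arc recognizes it via \find{} equality or via $M$ and deletes it in $O(1)$ time, and this cost can be charged to the (now dead) arc itself. Putting all the pieces together, I expect a total of $O(n \log n) + O(\Delta m) = O(\min\set{m^{1/2}, n^{2/3}}m)$ time, matching the bound for the unextended algorithm.
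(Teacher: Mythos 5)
The paper's own proof is one line: it says the argument mirrors \thmref{sparsetime} with \lemref{sparsecomponentlabel} in place of \lemref{arcmaxlevel}. Your proposal follows exactly that route and usefully fills in the details (bounding backward searches and $M$-resets by $\Delta$ per insertion, charging forward search and incident-list work to component-level increases, appealing to the $O(n\log n)+O(1)$-per-operation union-find bound), so at a high level you have the right proof.

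There is one concrete error in the Step~4 discussion. You claim the number of canonical vertices visited during Steps~2 and~3 of a single insertion is $O(\Delta)$. That is true for the backward search (Step~2), which aborts after $\Delta$ arc traversals, but it is false for the forward search (Step~3): a single forward search visits every canonical vertex reachable from $z$ whose level is below $k(z)$, which can be $\Theta(n)$ in one insertion, far exceeding $\Delta$. Consequently the number of vertices marked in a single insertion can also be $\Theta(n)$. What saves the bound is the amortized argument you already set up for the forward searches: each forward-search visit increases a component level, and by \lemref{sparsecomponentlabel} levels are $O(\Delta)$, so the total number of forward-search visits over all $m$ insertions is $O(n\Delta) = O(\Delta m)$. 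Since the work of Step~4 (marking, list concatenation, \proc{Unite} calls, trimming $L$) is linear in the number of vertices visited in Steps~2--3 of that insertion, the total Step~4 cost over the whole run is likewise $O(\Delta m)$. You should state Step~4's cost as amortized $O(\Delta m)$ over all insertions rather than $O(\Delta)$ per insertion; with that correction the argument is sound.
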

\begin{proof}
  The proof is like the proof of \thmref{sparsetime}, using
  \lemref{sparsecomponentlabel}. 
\end{proof}

The space required by the extended algorithm is $O(n^2)$, since the
bit matrix $M$ requires $O(n^2)$ space (or less if bits are packed
into words).  If we store $M$ in a hash table, the space becomes
$O(m)$ but the algorithm becomes randomized.  By using a three-level
data structure~\cite{TarjanYa79} to store $M$ we can reduce the space
to $O(n^{4/3}+m)$ without using randomization.  We obtain a simpler
algorithm with a time bound of $O(m^{3/2})$ by eliminating the
deletion of multiple arcs, thus avoiding the need for $M$, and letting
$\Delta = m^{1/2}$.  If we run this simpler algorithm until $m >
n^{4/3}$, then start over with all vertices on level one and indexed in
topological order and run the more-complicated algorithm with $M$ stored
in a three-level data structure, we obtain a deterministic algorithm
running in $O(\min\set{m^{1/2}, n^{2/3}} m)$ time and $O(m)$ space.

\subsection{ An Extension for Dense Graphs}

Our extension of the dense algorithm does two searches per arc
addition, the first to find cycles, the second to update levels,
bounds, and counts.  The levels, bounds, counts, and arc heaps are of
components, not vertices.  We store these values with the canonical
vertices of the components.  Initially each vertex is in its own
component, all levels are one, all bounds and counts are zero, and all
heaps are empty.  The algorithm deletes arcs with both ends in the
same component, as well as the second and subsequent arcs between the
same pair of components.  As in the sparse extension, to do the latter
it uses a bit matrix $M$ indexed by vertex, initially identically zero.

To insert an arc $(v, w)$, let $u = \find(x)$ and $z = \find(y)$.  If $k(u) <
k(z)$, add $(v, w)$ to $\id{out}(u)$ with priority $k(z)$.  If $k(u)
\geq k(z)$ and $u \neq z$, do Steps 1-4 below.  (If $u = z$ do nothing.)

\noindent \textbf{Step 1 (search for cycles):}  Let $S = \set{(v, w)}$.  Mark $u$.  Repeat the following step until $S$ is empty:

\emph{Cycle Traversal:} Delete some arc $(x, y)$ from $S$ and add it to $A$.  If
$\find(y)$ is marked, follow parent pointers from $\find(x)$, marking each
canonical vertex reached, until reaching a previously marked vertex.
If $k(\find(y)) < k(u)$, let $k(\find(y)) = k(u)$, let $p(\find(y)) = \find(x)$,
and delete from $\id{out}(\find(y))$ all arcs with priority at most $k(u)$ and
add them to $S$.

\noindent \textbf{Step 2 (form component):} If $z$ is marked, unite
the components containing the marked canonical vertices into a single
new component whose canonical vertex is $z$.  Form the new arc heap of $z$
by melding the heaps of the marked vertices, including $z$.  Unmark all
marked vertices.

\noindent \textbf{Step 3 (update levels, bounds, and counts):} Repeat
the following step until $A$ is empty:

\emph{Update Traversal:} Delete some arc $(x,y)$ from $A$.  If $\find(x) \neq
\find(y)$ and $M(\find(x), \find(y)) = 0$, proceed as follows.  Let
$M(\find(x), \find(y)) = 1$.  If $k(\find(x)) \geq k(\find(y))$, increase
$k(\find(y))$ to $k(\find(x)) + 1$; otherwise, let $i = \floor{\lg(k(\find(y)) -
k(\find(x)))}$, add one to $c_i(\find(y))$, and if $c_i(\find(y)) =
3*2^{i + 1}$, set $c_i(\find(y)) = 0$, set $k(\find(y)) =
\max\set{k(\find(y)), b_i(\find(y)) + 3*2^i}$, and set $b_i(\find(y)) =
k(\find(y)) - 2^{i + 1}$.  Delete from $\id{out}(\find(y))$ each arc with
priority at most $k(\find(y))$ and add it to $A$.  Add $(x, y)$ to
$\id{out}(\find(x))$ with priority $k(\find(y))$.

\noindent \textbf{Step 4 (reset $M$):} Reset each 1 in M to 0.

\begin{theorem}
  The extended dense algorithm correctly maintains strong components
  and the inequality $k(v) \leq size(v)$ for every vertex $v$.
\end{theorem}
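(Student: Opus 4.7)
The plan is to prove both claims simultaneously by induction on the number of arc insertions, mirroring the structure of the proofs of \lemreftwo{densemaxlevel}{densecycle} and \thmref{densecorrect}, but now working with strong components as the basic units rather than individual vertices. The inductive hypothesis is that before the insertion of $(v,w)$, the strong components are correct, the disjoint-set structure records them, and $k(v) \leq \id{size}(v)$ for every vertex~$v$ (equivalently, $k(x) \leq \id{size}(x)$ at each canonical vertex~$x$, and since size is constant within a component, this determines the level of the component).

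For correctness of strong components, I would first argue that Step~1 behaves like the one-way search of the unextended dense algorithm, but operating on components: it follows arcs out of $u$'s reachable set, and whenever processing an arc $(x,y)$ with $k(\find(x)) \ge k(\find(y))$ it raises $k(\find(y))$ to $k(u)$ and enqueues the corresponding outgoing arcs, while the $p$-pointers record a spanning out-tree rooted at~$u$. A marking-propagation argument then shows that the canonical vertices marked when Step~1 terminates are exactly the canonical vertices lying on a directed path from $z$ back to $u$, i.e., exactly the old components that must be fused into the new strong component created by $(v,w)$. The ``follow parent pointers until reaching a marked vertex'' rule guarantees all such vertices are marked: whenever an arc closes a cycle to an already-marked vertex, the chain of parents back along the search tree is marked. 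Step~2 then correctly installs the new component with canonical vertex~$z$ by uniting the disjoint-set trees and melding the arc heaps.

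For the size invariant I would extend \lemref{densemaxlevel}. The central observation is that \lemref{size} applies verbatim to components: if a component~$C$ has $j$ distinct predecessor components each of size at least~$s$, then $\id{size}(C) \ge s+j$. Consequently, the argument of \lemref{densemaxlevel}, which uses only the values of $b_i$, $c_i$, and the level-update rule, carries over with ``vertex'' replaced by ``canonical vertex of a component.'' Each increase in $k(\find(y))$ during Step~3 is justified exactly as in the unextended proof: either a new predecessor arc $(\find(x),\find(y))$ is seen with $k(\find(x)) \ge k(\find(y))$ (giving the $+1$ case), or at least $3\cdot 2^i$ traversals of distinct predecessors since $b_i(\find(y))$ was last reset force $\id{size}(\find(y)) \ge b_i(\find(y)) + 3\cdot 2^i$, with the repeated-arc subcase handled by the same pigeonhole argument. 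The role of the bit matrix $M$ is crucial here: it ensures that only \emph{distinct} predecessor components contribute to $c_i$, which is precisely what \lemref{size} requires. For Step~1's level raises to $k(u)$, the justification is that after the insertion $u$ is a predecessor of $\find(y)$ along the newly reachable path, so $\id{size}(\find(y)) \ge \id{size}(u) + 1 \ge k(u) + 1$. The cycle-creating case is handled by an analogue of \lemref{densecycle}: before the insertion $k(u) \le \id{size}(u)$, the single traversal of $(v,w)$ can raise $k$ on the component containing it by at most $n$, and all subsequent traversals operate on the now-acyclic structure obtained after Step~2 fuses the cycle's components into one.

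The main obstacle will be the bookkeeping during Step~2 when several components merge. I need to verify that discarding the bounds and counts of all marked canonical vertices other than $z$, and keeping only those of~$z$, preserves the invariant: the merged component has size equal to the sum of the merged sizes, which is at least $\id{size}(z)$ before the merge, so $z$'s inherited values $b_i(z)$ and $c_i(z)$ (which were already consistent with $\id{size}(z)$) remain safely below the new size. I would also need to check that no live arc between two merged components is traversed after Step~2 (the matrix $M$ together with the $\find(x)=\find(y)$ test in Step~3 guarantees this), so that the count-based argument is not corrupted by arcs that have silently become self-loops inside the new component. Once these bookkeeping points are verified, the invariants for the extended algorithm follow immediately from the component-level analogues of the lemmas already established for the unextended algorithm.
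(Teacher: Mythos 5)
Your proof takes essentially the same approach as the paper's: induction on arc insertions, arguing that Step~1's marking identifies exactly the components on a $z$-to-$u$ path, that its level raises to $k(u)$ are justified by reachability from $u$ after the insertion, and that Step~3 is a component-level replay of \lemref{densemaxlevel} with the matrix $M$ enforcing the distinctness hypothesis of \lemref{size}. One small correction to your bookkeeping paragraph: the size of a merged component is not the sum of the constituent sizes (the components that merge lie on a common cycle, so their predecessor sets largely overlap and a sum would grossly double-count); it is simply the number of vertices that can reach any one of them, and your conclusion that this is at least $\id{size}(z)$ still holds because merging can only enlarge the set of vertices that reach $z$'s component.
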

\begin{proof}
  The proof is by induction on the number of arc insertions.  The
  theorem holds initially.  Suppose it holds before the insertion of
  an arc $(v, w)$.  Let $u = \find(v)$ and $z = \find(w)$ before the
  insertion.  If $u = z$ or $k(u) < k(z)$, the theorem holds after the
  insertion.  Suppose $u \neq z$ and $k(u) \geq k(z)$, so that Steps
  1-4 are executed.  Step 1 visits all vertices of level less than
  $k(u)$ reachable from $z$ and increases their level to $k(u)$.
  Since $z$ is reachable from $u$ after the insertion of $(v, w)$, all
  such vertices have size at least $k(u)$ after the insertion.  Thus
  such increases in level maintain the inequality between levels and
  sizes.  If the insertion of $(v, w)$ creates a new component, the
  vertices in the component are exactly those on paths from $z$ to
  $u$, all of which must have level less than $k(u)$ before the
  insertion.  Thus Step 1 will visit all such vertices other than $u$
  and mark all of them, including $u$, and Step 2 will correctly form
  the new component.  Step 3 updates levels, bounds, and counts
  exactly as in the unextended algorithm except that it operates on
  components, not vertices, and it traverses no arcs with both ends in
  the same component and at most one arc between any pair of
  components.  The proof of \lemref{densemaxlevel} extends to show
  that Step 3 maintains the inequality between levels and sizes.  Thus
  the theorem holds after the insertion.
\end{proof}

\begin{theorem}
  The extended dense algorithm runs in $O(n^2\log n)$ total time.
\end{theorem}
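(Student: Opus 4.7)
The plan is to mirror the proof of \thmref{densetime}, adapting each step to operate on components rather than individual vertices. Three things need to be established: (a) the level-versus-size invariants of \lemreftwo{densemaxlevel}{densecycle} extend to the component setting; (b) the total number of arc traversals across Step 1 and Step 3 is $O(n^2 \log n)$; and (c) the per-traversal and amortized bookkeeping cost sums to $O(n^2 \log n)$.

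For (a), I would treat each component as a super-vertex in the DAG of strong components, defining the size of a canonical vertex to be the size of its component in this contracted graph. \lemref{size} is purely structural and applies verbatim to the contracted graph. Step 3 Update Traversals mimic exactly the traversal step of the unextended algorithm (the $M$-based deduplication merely skips work), so the proof of \lemref{densemaxlevel} carries over. For the Step 1 Cycle Traversals that raise $k(\find(y))$ to $k(u)$, the key observation is that after the insertion of $(v,w)$ the contracted graph makes $\find(y)$ reachable from $u$ (or merges them into one component), so $\id{size}(\find(y)) \geq \id{size}(u) \geq k(u)$ holds with respect to the post-insertion component DAG, as required.

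For (b), since there are at most $2n - 1$ canonical vertices ever and every component level stays below $2n$, the counting argument of \lemref{densetraversals} yields $O(n^2 \log n)$ nontrivial Step 3 Update Traversals. Each Cycle Traversal in Step 1 deposits its arc into $A$, so Cycle Traversals are bounded by the same quantity. For (c), with heaps implemented as array-of-buckets, each traversal performs $O(1)$ heap work plus a constant number of \find\ calls; the cited disjoint-set bound contributes $O(n \log n)$ plus $O(1)$ per operation. Heap melding in Step 2 costs $O(n)$ per meld, $O(n^2)$ over at most $n - 1$ merges. Smallest-nonempty-bucket scanning costs $O(n)$ per heap, summed over at most $2n - 1$ heaps. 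Parent-pointer marking chains amortize to $O(1)$ per vertex that ever loses canonical status, contributing $O(n)$ in total. Resetting the $1$-bits of $M$ in Step 4 is bounded by the number of Step 3 traversals that set them. These contributions sum to $O(n^2 \log n)$.

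The hardest part will be extending \lemref{densemaxlevel} through Step 1 in the presence of pending, not-yet-committed component merges: one must verify that every intermediate level assignment $k(\find(y)) \gets k(u)$ respects the size bound in the contracted graph \emph{even before} Step 2 finalizes the new component. This rests on the fact that a marked $\find(y)$ together with the marked root $u$ witnesses the requisite path structure in the post-insertion graph, which in turn invokes \lemref{size} on the appropriate contracted graph; getting this ordering of arguments right (maintaining the invariant throughout Step 1 while the eventual contracted graph is only determined at Step 2) is the subtle accounting step.
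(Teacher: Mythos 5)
Your proposal is correct and follows essentially the same route as the paper, whose own proof is a one-liner stating that the argument of \lemref{densetraversals} extends to bound the arc traversals of the extended algorithm by $O(n^2\log n)$. The level-versus-size invariant you re-derive in part (a) is handled by the paper in the immediately preceding correctness theorem, so your extra care there (and in the data-structure accounting of part (c)) fills in details the paper leaves implicit rather than taking a different approach.
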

\begin{proof}
  The proof of \lemref{densetraversals} extends to show that the
  extended dense algorithm does $O(n^2\log n)$ arc traversals, from which
  the theorem follows.
\end{proof}

\ 

The space required by the extended dense algorithm is $O(n^2)$, or
$O(n\log n + m)$ if the heaps and the matrix $M$ are stored in hash tables.

\secput{conc}{Concluding Remarks}

We have presented two algorithms for incremental cycle detection and
related problems, one for sparse graphs and one for dense graphs.
Their total running times are $O(\min\set{m^{1/2}, n^{2/3}}m)$ and
$O(n^2\log n)$, respectively.  The sparse algorithm is faster for graphs
whose density $m/n$ is $o(n^{1/3}\log n)$; the dense algorithm is faster for
graphs of density $\omega(n^{1/3}\log n)$.  The $O(n^{2/3}m)$ bound of the
sparse algorithm is best only for graphs with density in the sliver
from $\omega(n^{1/3})$ to $o(n^{1/3}\log n)$.  The HKMST paper gives a lower
bound of $\Omega(nm^{1/2})$ for algorithms that do vertex updates only
within the so-called ``affected region,'' the set of vertices that are
definitely out of order when a new arc is added.  Unlike previous
algorithms, our algorithms do not do updates completely within the
affected region, yet they do not beat the HKMST lower bound, and we
have no reason to believe it can be beaten.  On the other hand, for
graphs of intermediate density our bounds are far from $O(nm^{1/2})$,
and perhaps improvements coming closer to this bound are possible.

Another interesting research direction is to investigate whether batch
arc additions can be handled faster than single arc additions (other
than by reverting to a static algorithm if the batch is large enough).
See~\cite{AlpernHoRo90,PearceKe07}.  One may also ask whether arc
deletions, instead of or in addition to insertions, can be handled.
Our algorithms remain correct if arcs can be deleted as well as
inserted, but the time bounds are no longer valid, and we have no
interesting bounds.  Maintaining strong components as arcs are
deleted, or as arcs are inserted and deleted, is an even more
challenging problem.  See~\cite{RodittyZw08} and the references
contained therein.

\bibliographystyle{plain}
\bibliography{topsort}

\begin{thebibliography}{10}

\bibitem{AjwaniFr07}
Deepak Ajwani and Tobias Friedrich.
\newblock Average-case analysis of online topological ordering.
\newblock In Takeshi Tokuyama, editor, {\em ISAAC}, volume 4835 of {\em Lecture
  Notes in Computer Science}, pages 464--475. Springer, 2007.

\bibitem{AjwaniFrMe06}
Deepak Ajwani, Tobias Friedrich, and Ulrich Meyer.
\newblock An {$O(n^{2.75})$} algorithm for online topological ordering.
\newblock In {\em Proceedings of the 10th Scandinavian Workshop on Algorithm
  Theory}, volume 4059 of {\em Lecture Notes in Computer Science}, pages
  53--64. Springer, 2006.

\bibitem{AlpernHoRo90}
Bowen Alpern, Roger Hoover, Barry~K. Rosen, Peter~F. Sweeney, and F.~Kenneth
  Zadeck.
\newblock Incremental evaluation of computational circuits.
\newblock In {\em Proceedings of the First Annual ACM-SIAM Symposium on
  Discrete Algorithms}, pages 32--42, 1990.

\bibitem{BenderCoDe02}
Michael~A. Bender, Richard Cole, Erik~D. Demaine, Martin Farach-Colton, and
  Jack Zito.
\newblock Two simplified algorithms for maintaining order in a list.
\newblock In {\em Proceedings of the European Symposium on Algorithms}, pages
  152--164, 2002.

\bibitem{BenderFiGi09}
Michael~A. Bender, Jeremy~T. Fineman, and Seth Gilbert.
\newblock A new approach to incremental topological ordering.
\newblock In {\em Proceedings of the ACM-SIAM Symposium on Discrete
  Algorithms}, January 2009.

\bibitem{DietzSl87}
Paul~F. Dietz and Daniel~D. Sleator.
\newblock Two algorithms for maintaining order in a list.
\newblock In {\em Proceedings of the ACM Symposium on the Theory of Computing},
  pages 365--372, May 1987.

\bibitem{HaeuplerKaMa?}
Bernhard Haeupler, Telikepalli Kavitha, Roger Mathew, Siddhartha Sen, and
  Robert~E. Tarjan.
\newblock Incremental cycle detection, toplogical ordering, and strong
  component maintenance.
\newblock {\em ACM Transactions on Algorithms}, to appear.

\bibitem{HaeuplerKaMa08}
Bernhard Haeupler, Telikepalli Kavitha, Rogers Mathew, Siddhartha Sen, and
  Robert~E. Tarjan.
\newblock Faster algorithms for incremental topological ordering.
\newblock In {\em Proceedings of the 35th International Colloquium on Automata,
  Languages, and Programming}, July 2008.

\bibitem{HararyNoCa65}
Frank Harary, Robert~Z. Norman, and Dorwin Cartwright.
\newblock {\em Structural Models: An Introduction to the Theory of Directed
  Graphs}.
\newblock John Wiley \& Sons, 1965.

\bibitem{Katriel04}
Irit Katriel.
\newblock On algorithms for online topological ordering and sorting.
\newblock Technical Report MPI-I-2004-1-003, Max-Planck-Institut f\"{u}r
  Informatik, Saarbr\"{u}cken, Germany, 2004.

\bibitem{KatrielBo05}
Irit Katriel and Hans~L. Bodlaender.
\newblock Online topological ordering.
\newblock In {\em Proceedings of the 16th ACM-SIAM Symposium on Discrete
  Algorithms}, pages 443--450, Vancouver, British Columbia, Canada, January
  2005.

\bibitem{KavithaMa07}
Telikepalli Kavitha and Rogers Mathew.
\newblock Faster algorithms for online topological ordering.
\newblock {\em CoRR}, abs/0711.0251, 2007.

\bibitem{Knuth73}
Donald~E. Knuth.
\newblock {\em The Art of Computer Programming, Volume 1: Fundamental
  Algorithms}.
\newblock Addison Wesley, Reading, MA, 2nd edition, 1973.

\bibitem{KnuthSz74}
Donald~E. Knuth and Jayme~L. Szwarcfiter.
\newblock A structured program to generate all topological sorting
  arrangements.
\newblock {\em Information Processing Letters}, 2(6):153--157, 1974.

\bibitem{LiuCh07}
Hsiao-Fei Liu and Kun-Mao Chao.
\newblock A tight analysis of the katriel--bodlaender algorithm for online
  topological ordering.
\newblock {\em Theoretical Computer Science}, 389(1--2):182--189, 2007.

\bibitem{Marchetti-SpaccamelaNaRo96}
Alberto Marchetti-Spaccamela, Umberto Nanni, and Hans Rohnert.
\newblock Maintaining a topological order under edge insertions.
\newblock {\em Information Processing Letters}, 59(1):53--58, 1996.

\bibitem{Pearce05}
David~J. Pearce.
\newblock {\em Some Directed Graph Problems and Their Application to Pointer
  Analysis}.
\newblock PhD thesis, 2005.

\bibitem{PearceKe03}
David~J. Pearce and Paul H.~J. Kelly.
\newblock Online algorithms for topological order and strongly connected
  components.
\newblock Technical report, Imperial College, London, 2003.

\bibitem{PearceKe04}
David~J. Pearce and Paul H.~J. Kelly.
\newblock A dynamic algorithm for topologically sorting directed acyclic
  graphs.
\newblock In {\em Proceedings of the 3rd International Workshop on Efficient
  Experimental Algorithms}, volume 3059 of {\em Lecture Notes in Computer
  Science}, pages 383--398. Springer, 2004.

\bibitem{PearceKe07}
David~J. Pearce and Paul H.~J. Kelly.
\newblock A dynamic batch algorithm for maintaining a topological order.
\newblock Technical report, Victoria University, Wellington, New Zealand, 2007.

\bibitem{PearceKeHa03}
D.J. Pearce, P.H.J. Kelly, and C.~Hankin.
\newblock Online cycle detection and difference propagation for pointer
  analysis.
\newblock In {\em Proceedings of the 3rd International Workshop on Source Code
  Analysis and Manipulation}, pages 3--12, Sept. 2003.

\bibitem{RodittyZw08}
Liam Roditty and Uri Zwick.
\newblock Improved dynamic reachability algorithms for directed graphs.
\newblock {\em SIAM J. Comput.}, 37(5):1455--1471, 2008.

\bibitem{Szpilrajn30}
E.~Szpilrajn.
\newblock Sur l'extension de l'ordre partiel.
\newblock {\em Fundamenta Mathematicae}, 16:386--389, 1930.

\bibitem{Tarjan72}
Robert~E. Tarjan.
\newblock Depth-first search and linear graph algorithms.
\newblock {\em SIAM J. Computing}, 1(2):146--160, 1972.

\bibitem{Tarjan74}
Robert~E. Tarjan.
\newblock Finding dominators in directed graphs.
\newblock {\em SIAM J. Computing}, 3(1):62--89, 1974.

\bibitem{Tarjan75}
Robert~E. Tarjan.
\newblock Efficiency of a good but not linear set union algorithm.
\newblock {\em J. Assoc. Comput. Mach.}, 22(2):215--225, 1975.

\bibitem{TarjanYa79}
Robert~E. Tarjan and Andrew Chi-Chih Yao.
\newblock Storing a sparse table.
\newblock {\em Communications of the ACM}, 22(11):606--611, 1979.

\end{thebibliography}

\end{document}